\gdef\@copyrightpermission{
 \begin{minipage}{0.3\columnwidth}
  \href{https://creativecommons.org/licenses/by/4.0/}{\includegraphics[width=0.90\textwidth]{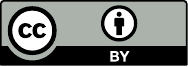}}
 \end{minipage}\hfill
 \begin{minipage}{0.7\columnwidth}
  \href{https://creativecommons.org/licenses/by/4.0/}{This work is licensed under a Creative Commons Attribution International 4.0 License.}
 \end{minipage}
 \vspace{5pt}
}
\def\@ACM@checkaffil{
    \if@ACM@instpresent\else
    \ClassWarningNoLine{\@classname}{No institution present for an affiliation}%
    \fi
    \if@ACM@citypresent\else
    \ClassWarningNoLine{\@classname}{No city present for an affiliation}%
    \fi
    \if@ACM@countrypresent\else
        \ClassWarningNoLine{\@classname}{No country present for an affiliation}%
    \fi
}
\newtheorem{theorem}{Theorem}
\newtheorem{proposition}{Proposition}
\setlist[itemize]{topsep=2pt}
\begin{document}

\title{Personalized Denoising Implicit Feedback for Robust Recommender System}

\author{Kaike Zhang}
\affiliation{%
  \institution{CAS Key Laboratory of AI Safety, Institute of Computing Technology, Chinese Academy of Sciences}
  \country{ }
}
\affiliation{%
  \institution{University of Chinese Academy}
  \country{of Sciences, Beijing, China}
}
\email{zhangkaike21s@ict.ac.cn}

\author{Qi Cao}
\affiliation{%
  \institution{CAS Key Laboratory of AI Safety, Institute of Computing Technology, Chinese Academy of Sciences,}
  \country{Beijing, China}
}
\email{caoqi@ict.ac.cn}
\authornote{Corresponding author}

\author{Yunfan Wu}
\affiliation{%
  \institution{CAS Key Laboratory of AI Safety, Institute of Computing Technology, Chinese Academy of Sciences}
  \country{ }
}
\affiliation{%
  \institution{University of Chinese Academy}
  \country{of Sciences, Beijing, China}
}
\email{wuyunfan19b@ict.ac.cn}

\author{Fei Sun}
\affiliation{%
  \institution{CAS Key Laboratory of AI Safety, Institute of Computing Technology, Chinese Academy of Sciences,}
  \country{Beijing, China}
}
\email{sunfei@ict.ac.cn}

\author{Huawei Shen}
\affiliation{%
  \institution{CAS Key Laboratory of AI Safety, Institute of Computing Technology, Chinese Academy of Sciences,}
  \country{Beijing, China}
}
\email{shenhuawei@ict.ac.cn}

\author{Xueqi Cheng}
\affiliation{%
  \institution{CAS Key Laboratory of AI Safety, Institute of Computing Technology, Chinese Academy of Sciences,}
  \country{Beijing, China}
}
\email{cxq@ict.ac.cn}

\renewcommand{\shortauthors}{Kaike Zhang et al.}

\begin{abstract}

While implicit feedback is foundational to modern recommender systems, factors such as human error, uncertainty, and ambiguity in user behavior inevitably introduce significant noise into this feedback, adversely affecting the accuracy and robustness of recommendations. To address this issue, existing methods typically aim to reduce the training weight of noisy feedback or discard it entirely, based on the observation that noisy interactions often exhibit higher losses in the \textit{overall loss distribution}. However, we identify two key issues: (1) there is a significant overlap between normal and noisy interactions in the overall loss distribution, and (2) this overlap becomes even more pronounced when transitioning from pointwise loss functions (e.g., BCE loss) to pairwise loss functions (e.g., BPR loss). This overlap leads traditional methods to misclassify noisy interactions as normal, and vice versa. To tackle these challenges, we further investigate the loss overlap and find that \textit{for a given user, there is a clear distinction between normal and noisy interactions in the user's personal loss distribution.} Based on this insight, we propose a resampling strategy to \textbf{D}enoise using the user's \textbf{P}ersonal \textbf{L}oss distribution, named \textbf{PLD}, which reduces the probability of noisy interactions being optimized. Specifically, during each optimization iteration, we create a candidate item pool for each user and resample the items from this pool based on the user's personal loss distribution, prioritizing normal interactions. Additionally, we conduct a theoretical analysis to validate PLD's effectiveness and suggest ways to further enhance its performance. Extensive experiments conducted on three datasets with varying noise ratios demonstrate PLD's efficacy and robustness.

\end{abstract}

\begin{CCSXML}
<ccs2012>
   <concept>
       <concept_id>10002951.10003317.10003347.10003350</concept_id>
       <concept_desc>Information systems~Recommender systems</concept_desc>
       <concept_significance>500</concept_significance>
       </concept>
   <concept>
       <concept_id>10002978.10003022.10003027</concept_id>
       <concept_desc>Security and privacy~Social network security and privacy</concept_desc>
       <concept_significance>500</concept_significance>
       </concept>
 </ccs2012>
\end{CCSXML}

\ccsdesc[500]{Information systems~Recommender systems}
\ccsdesc[500]{Security and privacy~Social network security and privacy}

\keywords{Robust Recommender System, Denoising Recommendation, Implicit Feedback}

\maketitle

\section{INTRODUCTION}
Recommender systems have become essential tools for mitigating information overload in the modern era~\cite{koren2009matrix, he2020lightgcn}. Since obtaining explicit user feedback (e.g., ratings) is often hindered by the need for active user participation, these systems typically rely on implicit feedback to capture user behavior patterns, thereby facilitating effective recommendations~\cite{gantner2012personalized, he2024double, saito2020unbiased}. Nonetheless, factors such as human error, uncertainty, and ambiguity in user behavior inevitably introduce significant noise into this feedback~\cite{toledo2016fuzzy, zhang2023robust}. This noise can bias learned behavior patterns, undermine system robustness, and degrade recommendation performance~\cite{zhang2023robust, wu2016collaborative}.

Mainstream methods for noise elimination in recommender systems primarily focus on reweighting feedback. A commonly observed pattern in the overall loss distribution, which represents the losses of all interactions (i.e., feedback), is that \textbf{noisy interactions tend to exhibit higher losses during training}~\cite{wang2021denoising, he2024double, gao2022selfguided, lin2023autodenoise}. Based on this observation, these methods either reduce the training weight of high-loss interactions or discard them entirely. For instance, R-CE~\cite{wang2021denoising} assigns weights to interactions based on their loss magnitude, with higher losses receiving smaller weights. T-CE~\cite{wang2021denoising} proportionally discards interactions with the highest losses at a predefined rate. These methods typically compute the loss for each interaction using pointwise loss functions, such as Binary Cross Entropy (BCE) loss~\cite{wang2021denoising, he2024double}. 

However, we identify two key limitations in this approach. To illustrate, we separate the overall loss distribution into normal and noisy interaction loss distributions. For clarity, we define an overlap region that includes interactions deviating from the assumptions of existing methods, i.e., where noisy interactions exhibit lower losses or normal interactions exhibit higher losses:

\begin{figure}
    \centering
    \includegraphics[width=0.475\textwidth]{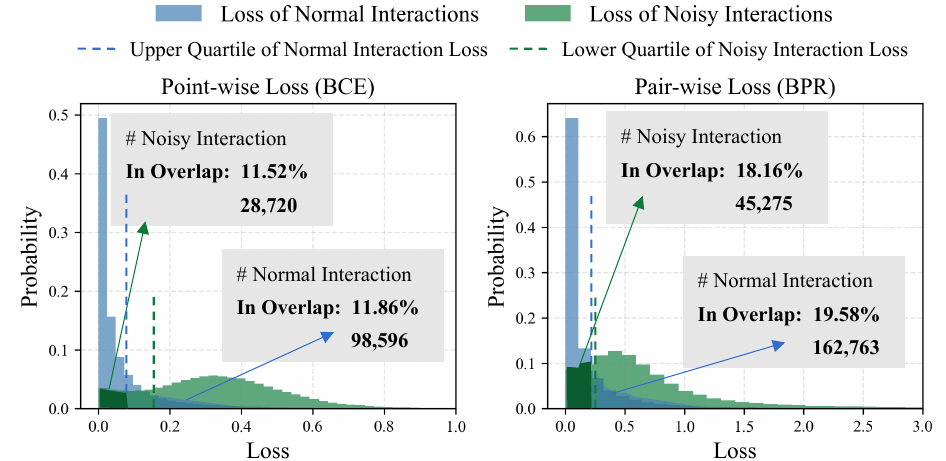}
    \caption{Probability distribution of losses. The overlap region includes interactions that deviate from the common assumption in existing methods, i.e., where noisy interactions exhibit lower losses or normal interactions exhibit higher losses. Quartiles are used instead of max-min values to mitigate the influence of extreme values when determining the overlap region.}
    \label{fig:intro_loss}
\end{figure}
\begin{itemize}[leftmargin=*]
    \item We observe a significant overlap between normal and noisy interactions in the overall loss distribution, as shown on the left side of Figure~\ref{fig:intro_loss}. In LightGCN~\cite{he2020lightgcn}, trained with BCE loss on the MIND~\cite{wu2020mind} dataset with a 30\% noise ratio, 11.86\% of normal interactions (\num{98596}) fall within the overlap, while 11.52\% of noisy interactions (\num{35926}) also fall within this overlap.
    \item Moreover, this overlap becomes more pronounced when transitioning from pointwise loss functions to pairwise loss functions, such as BPR loss~\cite{rendle2009bpr}, as shown on the right side of Figure~\ref{fig:intro_loss}. Specifically, the percentage of normal interactions in the overlap increases to 19.58\% (\num{162763}, a 65.1\% increase), while the percentage of noisy interactions rises to 18.16\% (\num{60170}, a 67.5\% increase).
\end{itemize}
It is important to note that interactions in the overlap region cannot be reliably identified using the overall loss distribution alone. This limits the effectiveness of methods that rely solely on the overall loss distribution for denoising.

To address these issues, we investigate the causes of the overlap between noisy and normal interactions in the overall loss distribution. We find that, due to the variance in users' personal loss distributions, the losses of normal interactions for some users overlap with those of noisy interactions for others, leading to significant overlap in the \textbf{overall loss distribution}. Furthermore, we observe that, for a given user, \textbf{there is a clear distinction between normal and noisy interactions in the user's personal loss distribution}.

Based on this insight, denoising from the perspective of a user's personal loss distribution, rather than the overall loss distribution, yields more effective results. However, the variance in users' personal loss distributions and the differing ratios of noisy interactions across users make it challenging to set an appropriate drop rate for filtering noisy interactions or to adjust their weights based on interaction losses, as traditional denoising approaches do~\cite{wang2021denoising, he2024double}.

Given these considerations, we propose a resampling strategy for \textbf{D}enoising based on users' \textbf{P}ersonal \textbf{L}oss distributions, named \textbf{PLD}. PLD reduces the probability of noisy interactions being optimized by resampling training interactions. Specifically, PLD first uniformly samples a user's interacted items to construct candidate pools,  ensuring the stability of subsequent resampling. In the resampling stage, it selects an item for optimization from the candidate pool based on the user's personal loss distribution, prioritizing normal interactions. Additionally, we conduct an in-depth theoretical analysis of PLD, demonstrating its effectiveness and suggesting that adjusting the sharpness of the resampling distribution using a scaling coefficient can further improve the probability of sampling normal interactions. Extensive experiments show that PLD achieves state-of-the-art performance across various noise ratios, not only with BCE loss but also with BPR loss.

The main contributions of our work are as follows:

\begin{itemize}[leftmargin=*]
    \item We identify the limitations of existing loss-based denoising methods, highlighting the significant overlap between normal and noisy  interactions in the overall loss distribution.
    \item We find that, for a given user, there is a clear distinction between normal and noisy interactions in the user's personal loss distribution. Leveraging this insight, we propose a resampling strategy for denoising, PLD.
    \item We conduct an in-depth theoretical analysis, proving PLD's effectiveness and suggesting ways to further enhance its performance.
    \item Extensive experiments validate the superiority of our proposed method across various datasets and noise ratios.
\end{itemize}

\section{RELATED WORK}
\subsection{Collaborative Filtering}
Collaborative Filtering (CF) remains a fundamental technique in the design of recommender systems and has been extensively adopted in numerous research efforts~\cite{schafer2007collaborative, covington2016deep, ying2018graph}. At its core, CF operates on the principle that users with similar behaviors or preferences are likely to have aligned future choices, making it a powerful tool for predicting recommendations~\cite{koren2021advances}. A widely used method within this paradigm is Matrix Factorization, which models latent relationships between users and items by factorizing the interaction matrix~\cite{koren2009matrix}. This interaction matrix can be constructed from both explicit feedback, such as ratings, and implicit feedback, which includes indirect behavioral signals like clicks, views, and purchases~\cite{hu2008collaborative}. Although implicit feedback is often noisier and lacks clear negative signals, it provides a wealth of data that is crucial for building recommendation models in real-world scenarios where explicit ratings are scarce~\cite{hu2008collaborative}.

In recent years, the introduction of deep learning has expanded CF’s capabilities, particularly for handling the complexities of implicit feedback. Neural-based models can capture more nuanced user-item interactions, often observed through implicit signals. For example, CDL~\cite{wang2015collaborative} integrates auxiliary item data into CF using neural networks, effectively addressing data sparsity. Similarly, NCF~\cite{he2017neural} replaces the traditional dot product operation with a multi-layer neural architecture, which is better suited for modeling the intricate patterns found in implicit user interactions. More recently, the rise of Graph Neural Networks (GNNs) has inspired graph-based CF models~\cite{wang2020disentangled, xia2022hypergraph, wu2022graph}, such as NGCF~\cite{wang2019neural} and LightGCN~\cite{he2020lightgcn}, which have shown exceptional performance in leveraging implicit feedback. Despite these advancements, an issue remains: the vulnerability of these models to noise, particularly from implicit data, which continues to undermine their robustness~\cite{zhang2023robust}.

\subsection{Denoising Implicit Feedback}
Recommender systems that rely on implicit feedback have garnered substantial attention. However, recent research highlights their susceptibility to noise in implicit feedback~\cite{zhang2023robust, wang2023tutorial}. The primary strategies for mitigating noise in recommender systems can be broadly categorized into two types~\cite{zhang2023robust}: reweight-based approaches~\cite{wang2021denoising, he2024double, lin2023autodenoise, wang2023efficient, gao2022selfguided, ye2023towards, tian2022learning} and self-supervised approaches~\cite{yang2022knowledge, wu2021selfsupervised, wang2022learning}.

\textbf{Reweight-based Methods.} These approaches aim to reduce or eliminate the influence of noisy interactions by adjusting their contributions during training~\cite{wang2021denoising, he2024double, lin2023autodenoise, wang2023efficient}. Some methods reduce the weights of noisy interactions~\cite{wang2021denoising, gao2022selfguided, lin2023autodenoise, wang2023efficient}, while others remove them entirely~\cite{wang2021denoising, he2024double}. A common observation driving these methods is that noisy interactions typically produce higher training losses in the overall loss distribution~\cite{wang2021denoising, he2024double, gao2022selfguided, lin2023autodenoise}. For instance, R-CE~\cite{wang2021denoising} leverages loss values as indicators of noise, assigning reduced weights to potentially noisy interactions, while T-CE~\cite{wang2021denoising} eliminates interactions with the highest loss values at a predefined drop rate. DCF~\cite{he2024double} further addresses challenges posed by hard positive samples and the data sparsity introduced by dropping interactions. However, since normal and noisy interactions overlap in the overall loss distribution, the effectiveness of these methods is limited. Additionally, BOD~\cite{wang2023efficient} formulates the determination of interaction weights as a bi-level optimization problem to learn more effective denoising weights, though this approach is significantly more time-consuming.

\textbf{Self-supervised Methods.} Self-supervised approaches mitigate noise by introducing auxiliary signals through self-supervised learning~\cite{ma2024madm, wang2022learning, fan2023graph, quan2023robust, zhu2023knowledge}. For example, SGL~\cite{wu2021selfsupervised} enhances the robustness of user-item representations by applying various graph augmentations, such as node dropping and edge masking. KGCL~\cite{yang2022knowledge} incorporates external knowledge graph data to refine the masking process. Meanwhile, DeCA~\cite{wang2022learning} posits that clean data samples tend to yield consistent predictions across different models and, therefore incorporates two recommendation models during training to better differentiate between clean and noisy interactions. However, self-supervised approaches rely heavily on the design of self-supervised tasks, and these heuristics cannot always guarantee effective denoising performance.

\section{PRELIMINARY}

We begin by providing a formal description of the task. Following~\cite{he2020lightgcn, he2024double}, we define a set of users as $\mathcal{U} = \{u\}$ and a set of items as $\mathcal{V} = \{v\}$, along with an observed interaction set, $\mathcal{I} = \{(u, v) \mid u \in \mathcal{U}, v \in \mathcal{V}\}$, where the pair \((u, v)\) indicates that user \(u\) has interacted with item \(v\). Generally, recommendation methods based on implicit feedback are trained on interaction data, treating \((u, v) \in \mathcal{I}\) as positive samples and \((u, v) \in (\mathcal{U} \times \mathcal{V}) \setminus \mathcal{I}\) as negative samples to learn the parameters \(\Theta\). The training of the recommendation model is formulated as:
\begin{equation*}
    \Theta^{*} = \arg\min_{\Theta} \mathcal{L}(\mathcal{U},\mathcal{V},\mathcal{I}),
\end{equation*}
where \(\mathcal{L}\) denotes the recommendation loss. However, the observed interaction set \(\mathcal{I}\) may contain noise, leading to a deviation between the learned parameters \(\Theta^{*}\) and the ideal parameters \(\Theta^{\mathrm{Ideal}}\), which represent the optimal model parameters in the absence of noise. The goal of denoising methods is to mitigate the impact of noise in the observed interaction set \(\mathcal{I}\) on the model parameters~\cite{zhang2023robust, zhao2024denoising, yu2020sampler}.

\section{METHOD}
\label{sec:method}
To address the limitations of current denoising techniques~\cite{wang2021denoising, he2024double, gao2022selfguided, lin2023autodenoise}, we conduct a thorough investigation into the underlying causes of the overlap between normal and noisy interactions in the overall loss distribution. Then, we refine existing denoising criteria and introduce a novel resampling strategy for denoising based on users' personal loss distributions, called PLD. Furthermore, we enhance the denoising capability of PLD through rigorous theoretical analysis, resulting in a more robust and effective denoising methodology.

\subsection{Motivation}
To investigate the causes of the overlap between normal and noisy interactions in the overall loss distribution, we conduct an experimental analysis. Using the MIND dataset~\cite{wu2020mind} as a case study, we introduce additional noise ratios of 10\%, 20\%, 30\%, and 40\% into user interactions and evaluate the impact on LightGCN~\cite{he2020lightgcn}\footnote{Similar experiments are conducted on different datasets and models, yielding consistent results. Due to space constraints, we present only the results for this configuration here.}. Detailed description of the experimental setup can be found in Section~\ref{sec:exp_setup}.

To facilitate this analysis, we introduce the following notation:
\begin{itemize}[leftmargin=*]
    \item $\mathcal{I}_{\text{normal}}$: the set of normal interactions.
    \item $\mathcal{I}_{\text{noise}}$: the set of noisy interactions.
    \item $l_{u,v}$: the loss corresponding to the interaction between user $u$ and item $v$.
    \item $\mathcal{O}$: the overlap region containing noisy interactions with lower losses and normal interactions with higher losses in the \textbf{overall loss distribution}, as depicted in Figure~\ref{fig:user_loss_org}.
    \item $\mathcal{O}_{u}$: the overlap region in user $u$'s \textbf{personal loss distribution}.
\end{itemize}
\textbf{Note that} Quartiles are used instead of max-min values to mitigate the influence of extreme values when determining overlap regions.
We further define the following sets to analyze interactions within the overlap regions:
\begin{itemize}[leftmargin=*]
    \item $\mathcal{I}^{\mathcal{G}}_{\text{normal}} = \{ (u, v) \mid (u, v) \in \mathcal{I}_{\text{normal}} \land l_{u,v} \in \mathcal{O}\}$: the set of normal interactions that fall within the overlap region of the overall loss distribution.
    \item $\mathcal{I}^{\mathcal{G}}_{\text{noise}} = \{ (u, v) \mid (u, v) \in \mathcal{I}_{\text{noise}} \land l_{u,v} \in \mathcal{O} \}$: the set of noisy interactions that fall within the overlap region of the overall loss distribution.
    \item $\mathcal{I}^{\mathcal{P}}_{\text{normal}} = \{ (u, v) \mid (u, v) \in \mathcal{I}_{\text{normal}} \land l_{u,v} \in \mathcal{O}_{u} \}$: the set of normal interactions that fall within the overlap region of the personal loss distribution.
    \item $\mathcal{I}^{\mathcal{P}}_{\text{noise}} = \{ (u, v) \mid (u, v) \in \mathcal{I}_{\text{noise}} \land l_{u,v} \in \mathcal{O}_{u} \}$: the set of noisy interactions that fall within the overlap region of the personal loss distribution.
\end{itemize}

\begin{figure}
    \centering
    \includegraphics[width=0.485\textwidth]{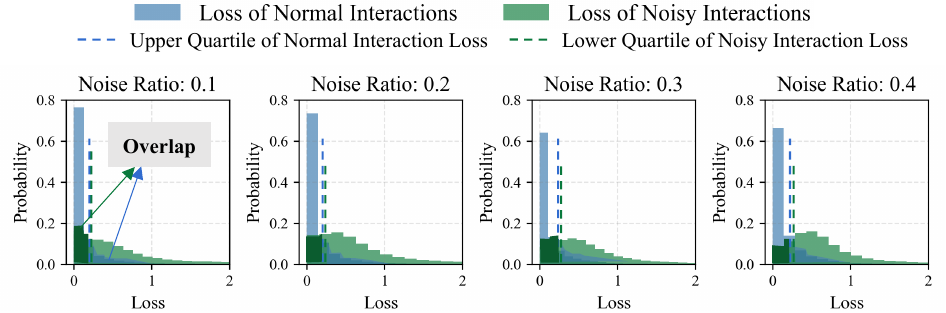}
    \caption{Probability Distribution of losses.}
    \label{fig:user_loss_org}
\end{figure}



\begin{table}[t]
  \centering
    \caption{Statistics of overall loss distribution}
    \resizebox{0.475\textwidth}{!}{

\begin{tabular}{ccccc}
    \toprule
    \textbf{ Noise Ratio } & $\vert \mathcal{I}^{\mathcal{G}}_{\text{normal}} \vert$  & $\vert \mathcal{I}^{\mathcal{G}}_{\text{normal}} \vert / \vert \mathcal{I}_{\text{normal}} \vert $ & $\vert \mathcal{I}^{\mathcal{G}}_{\text{noise}} \vert$    & $\vert \mathcal{I}^{\mathcal{G}}_{\text{noise}} \vert / \vert \mathcal{I}_{\text{noise}} \vert$\\
    \midrule
    0.1 &152,892 &18.40\% &12,976 &15.61\% \\
    0.2 &162,700 &19.58\% &29,130 &17.52\% \\
    0.3 &162,763 &19.58\% &45,275 &18.16\% \\
    0.4 &159,454 &19.19\% &59,486 &17.89\% \\
    \bottomrule
    \end{tabular}
    }
  \label{tab:inter_user}%
\end{table}%

\textbf{Overall Loss Distribution.} The overall loss distribution consists of the loss of all interactions. For clarity, we separate the overall loss distribution into normal and noisy interaction loss distributions. Figure~\ref{fig:user_loss_org} illustrates that normal and noisy interactions exhibit significant overlap in the overall loss distribution across varying noise ratios. As shown in Table~\ref{tab:inter_user}, across different noise ratios, the values of $\vert \mathcal{I}^{\mathcal{G}}_{\text{normal}} \vert / \vert \mathcal{I}_{\text{normal}} \vert$ and $\vert \mathcal{I}^{\mathcal{G}}_{\text{noise}} \vert / \vert \mathcal{I}_{\text{noise}} \vert$ are generally high. This makes it difficult to distinguish between normal and noisy interactions based on the overall loss distribution, increasing the likelihood of denoising errors in existing methods~\cite{wang2021denoising, he2024double, gao2022selfguided, lin2023autodenoise}. Consequently, relying solely on the overall loss distribution may not be an effective approach for differentiating between normal and noisy interactions.

\textbf{User Case in Personal Loss Distribution.} To further analyze these interactions, we randomly select five users from the dataset and display their personal loss distributions across varying noise ratios. As shown in Figure~\ref{fig:user_loss}, for each user, the losses of normal interactions consistently remain lower than those of noisy interactions. However, due to significant variance in users' personal loss distributions, the overlap depicted in Figure~\ref{fig:user_loss_org} is primarily attributed to certain users exhibiting normal interaction losses that exceed other users' noisy interaction losses. For example, at a noise ratio of 0.2, the noisy interaction losses of user 2 differ from corresponding normal interaction losses but are similar to the normal interaction losses of user 5.

\begin{figure}
    \centering
    \includegraphics[width=0.485\textwidth]{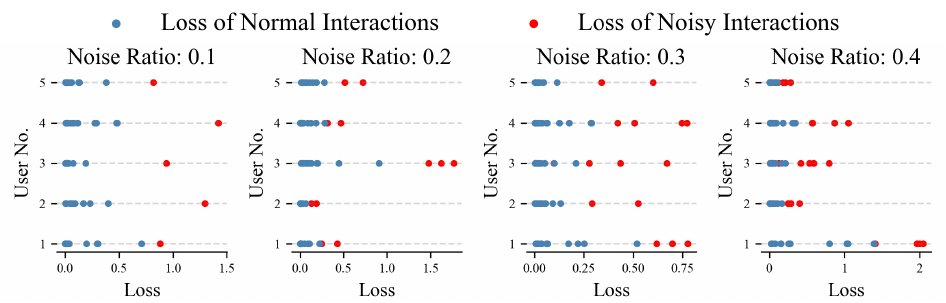}
    \caption{Personal loss distribution for five users.}
    \label{fig:user_loss}
\end{figure}

\textbf{Statistics of Personal Loss Distribution.} Building on the previous user case analysis, we propose that noisy interactions can be more effectively identified by analyzing users' personal loss distributions. To further illustrate this, we examine the statistical differences between users' normal and noisy interaction losses. For each user, we compute the difference between the lower quartile of their normal interaction losses and the upper quartile of their noisy interaction losses. As depicted in Figure~\ref{fig:user_diff}, most users exhibit higher noisy interaction losses compared to their normal interaction losses, a trend that persists across all noise levels. As shown in Table~\ref{tab:inter_person}, compared to $\vert \mathcal{I}^{\mathcal{G}}_{\text{normal}} \vert$ and $\vert \mathcal{I}^{\mathcal{G}}_{\text{noise}} \vert$, $\vert \mathcal{I}^{\mathcal{P}}_{\text{normal}} \vert$ and $\vert \mathcal{I}^{\mathcal{P}}_{\text{noise}} \vert$ decrease significantly, further validating the effectiveness of personal loss distributions for distinguishing normal interactions from noisy ones. This observation offers valuable insights for potential improvements in denoising strategies.

\subsection{PLD Methodology}
Based on the above insights, a straightforward denoising method would treat higher-loss interactions within the personal loss distribution as noise. However, the sparsity of user interactions causes significant fluctuations in personal loss distributions. As a result, reweight-based methods may cause drastic changes in the weight assigned to the same interaction across consecutive epochs, undermining training stability. Additionally, due to variations in the presence and amount of noise, dropping the highest-loss interactions could negatively affect users with little or no noisy interactions. For instance, with a fixed drop rate (e.g., 10\%), a user without noisy interactions would still experience a 10\% drop in normal interactions during training, which would degrade the user's experience.

\begin{figure}
    \centering
    \includegraphics[width=0.485\textwidth]{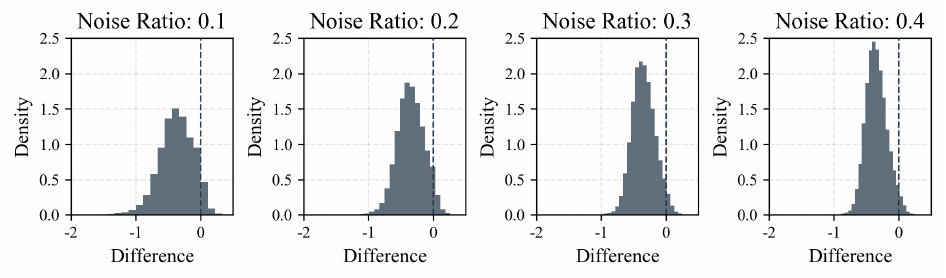}
    \caption{Difference between normal and noisy interactions in personal loss distributions across all users.}
    \label{fig:user_diff}
\end{figure}

\begin{table}[t]
  \centering
    \caption{Statistics of personal loss distribution}
    \resizebox{0.475\textwidth}{!}{

\begin{tabular}{ccccc}
    \toprule
    \textbf{ Noise Ratio } & $\vert \mathcal{I}^{\mathcal{P}}_{\text{normal}} \vert$  & $\vert \mathcal{I}^{\mathcal{P}}_{\text{normal}} \vert / \vert \mathcal{I}_{\text{normal}} \vert $ & $\vert \mathcal{I}^{\mathcal{P}}_{\text{noise}} \vert$    & $\vert \mathcal{I}^{\mathcal{P}}_{\text{noise}} \vert / \vert \mathcal{I}_{\text{noise}} \vert$\\
    \midrule
    0.1 &38,998 &5.13\% &2,125 &2.79\% \\
    0.2 &36,971 &4.44\% &4,979 &2.99\% \\
    0.3 &35,571 &4.28\% &7,969 &3.19\% \\
    0.4 &35,511 &4.27\% &10,771 &3.24\% \\
    \bottomrule
    \end{tabular}
    }
  \label{tab:inter_person}%
\end{table}%

To address these issues, we propose solving this problem through probabilistic sampling. Specifically, we aim to reduce the probability of noisy interactions being optimized while ensuring that users without noise remain unaffected. To this end, we propose a resampling strategy named PLD, which consists of two parts: Candidate Pool Construction and Item Resampling.

\textbf{Candidate Pool Construction.} To prevent items with extremely small losses from being repeatedly sampled, we pre-construct a candidate item pool, $\mathcal{C}_{u}^{k}$ of size $k$ for each user $u$. Items in $\mathcal{C}_{u}^{k}$ are randomly sampled from the user's interacted items, $\mathcal{V}_{u}$.

\textbf{Item Resampling.} Next, we calculate the loss $l_{u,v}$ for each of the $k$ items in the candidate pool. We then perform resampling based on the computed loss values. Specifically, for user $u$, the sampling probability for item $v$ in the candidate pool $\mathcal{C}_{u}^{k}$ is determined by:
\begin{equation}
\label{eq:p_i}
    P_{u, v} = \frac{\exp(-l_{u,v})}{\sum_{j \in \mathcal{C}_{u}^{k}} \exp(-l_{u,j})}.
\end{equation}
Finally, the resampled item is selected as the positive interaction for the current optimization step.

This method ensures that variances in personal loss distributions do not adversely affect the sampling process. Moreover, this approach ensures that normal interactions are optimized, even for users without noisy interactions—unlike previous methods, which always drop a subset of interactions~\cite{wang2021denoising, he2024double}.

\subsection{Theoretical Analysis}
To analyze the effectiveness of the PLD method, we examine the probability that PLD samples both normal and noisy interactions.
\begin{theorem}
\label{the:p_i_j}
    For a user \( u \), there are \( n \) items with normal interactions and \( m \) items with noisy interactions. 
    Suppose the loss of each normal interaction follows a distribution with mean \(\mu_1\) and variance \(\sigma^2\), and the loss of each noisy interaction follows a distribution with mean \(\mu_2\) and variance \(\sigma^2\). We assume \(\mu_1 < \mu_2\) and \(\mu_1, \mu_2 > \sigma\). From these \( m+n \) interactions, we first randomly select $k$ interactions, and then resample one positive interaction according to Equation~\ref{eq:p_i}. Let \( \Lambda_{\text{normal}} \) denote the sum of sampling probabilities for normal interactions, and \( \Lambda_{\text{noise}} \) denote the sum of sampling probabilities for noisy interactions. 
    Let \(\alpha\) and \(\beta\) represent the expectations of the normal and noisy interaction losses, respectively, where the expectation is taken over the exponential of the loss. Define the following:
    \begin{equation*}
        \begin{aligned}
            & \begin{aligned}
                \gamma &= \exp(\sigma^2) - 1, 
                 &\eta = \frac{n\alpha + m\beta}{n+m},\\
            \end{aligned}\\
            & \begin{aligned}
                \Gamma &= \frac{(n\alpha - m\beta)}{m+n} \cdot \frac{(\alpha^2 + \beta^2)(\gamma + \frac{m}{n+m}) + \beta^2 }{\eta^3},\\
                \chi &= \frac{\gamma}{(n+m)} \left[ n\alpha^2 - m\beta^2 \right]
            \end{aligned} 
        \end{aligned}
    \end{equation*}
    we have:
    \begin{equation}
    \label{eq:the}
        \begin{aligned}
            \mathbb{E}[\Lambda_{\text{normal}} - \Lambda_{\text{noise}}] = \left \{
                \begin{aligned}
                    &\frac{n - m}{n + m} ~~ & k = 1 \\
                    & \begin{aligned}
                        & \frac{n\alpha - m\beta}{(m+n)\eta} + \underbrace{\frac{\Gamma}{k} - \frac{\chi}{C^2} \frac{k}{(k-1)^2}}_{\mathrm{Fluctuation}~~\mathrm{term}}
                    \end{aligned}
                     ~~ & k > 1 \\
                \end{aligned}
            \right. ,
        \end{aligned}
    \end{equation}
    where \( C \in [\beta, \alpha] \) is a constant term.
\end{theorem}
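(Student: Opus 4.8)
The plan is to compute $\mathbb{E}[\Lambda_{\text{normal}} - \Lambda_{\text{noise}}]$ by first handling the trivial case $k=1$ and then treating $k>1$ via a Taylor expansion of the ratio in Equation~\ref{eq:p_i} around its mean. For $k=1$, a single interaction is drawn uniformly from the $m+n$ items and then kept with probability $1$ (since there is only one candidate, $P_{u,v}=1$); hence $\Lambda_{\text{normal}}$ is the indicator that the chosen item is normal and $\mathbb{E}[\Lambda_{\text{normal}} - \Lambda_{\text{noise}}] = \frac{n}{n+m} - \frac{m}{n+m} = \frac{n-m}{n+m}$, matching the first branch. For the rest I would set up notation: write $w_{u,v} = \exp(-l_{u,v})$, so that $\alpha = \mathbb{E}[w]$ for a normal interaction and $\beta = \mathbb{E}[w]$ for a noisy one (using the log-normal-type moment, $\mathbb{E}[\exp(-l)] = \exp(-\mu + \sigma^2/2)$, which gives $\alpha > \beta$ precisely when $\mu_1 < \mu_2$, and the condition $\mu_1,\mu_2 > \sigma$ controls the higher moments). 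The key quantity is, conditional on the drawn set $S$ of size $k$, $\Lambda_{\text{normal}} - \Lambda_{\text{noise}} = \frac{\sum_{v\in S}\epsilon_v w_{u,v}}{\sum_{v\in S} w_{u,v}}$ where $\epsilon_v = +1$ for normal and $-1$ for noisy items.

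Next I would take expectations in two stages: first over the random weights given which items are normal/noisy, then over the hypergeometric choice of how many of the $k$ drawn items are normal. For the inner expectation I would expand $f(X,Y) = \frac{X_1 - X_2}{Y}$ where $X_1 = \sum_{\text{normal}\in S} w$, $X_2 = \sum_{\text{noisy}\in S} w$, $Y = X_1 + X_2$, around the point $(\mathbb{E}X_1, \mathbb{E}X_2, \mathbb{E}Y)$ using a second-order Taylor expansion. The leading term gives $\frac{\mathbb{E}X_1 - \mathbb{E}X_2}{\mathbb{E}Y}$; after averaging the counts this produces the main term $\frac{n\alpha - m\beta}{(m+n)\eta}$ with $\eta = \frac{n\alpha + m\beta}{n+m}$. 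The second-order corrections involve $\mathrm{Var}(Y)$, $\mathrm{Cov}(X_1,Y)$, $\mathrm{Cov}(X_2,Y)$, each of which scales like $\frac{1}{k}$ (sum of $k$ roughly independent terms divided by a mean of order $k$), and these, once the variances $\gamma\alpha^2$ and $\gamma\beta^2$ (with $\gamma = e^{\sigma^2}-1$, the log-normal relative variance) are substituted and the hypergeometric moments of the counts are plugged in, collapse to the stated $\frac{\Gamma}{k}$ term plus the $-\frac{\chi}{C^2}\frac{k}{(k-1)^2}$ piece; the constant $C\in[\beta,\alpha]$ arises from a mean-value-theorem evaluation of the remainder of the Taylor expansion of $1/Y$ (or of the cross term), since $Y/k$ lies between $\beta$ and $\alpha$.

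I expect the main obstacle to be the bookkeeping in the second-order expansion combined with the hypergeometric averaging over the number of normal items among the $k$ sampled: one must carefully track $\mathbb{E}[\text{count}]$, $\mathrm{Var}(\text{count})$, and the resulting cross-moments, and verify that the $O(1/k)$ terms assemble exactly into $\Gamma/k$ and $\chi k/(C^2(k-1)^2)$ rather than some messier expression. A secondary subtlety is justifying that the Taylor remainder can be written with a single constant $C$ on $[\beta,\alpha]$ — this needs the denominator $Y$ to stay bounded away from $0$, which follows from $w_{u,v}>0$ and the moment assumptions, and an application of the intermediate value theorem to push the remainder into the form displayed. Once those two pieces are in place, the identity in Equation~\ref{eq:the} follows by collecting terms.
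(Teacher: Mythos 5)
Your plan follows essentially the same route as the paper's proof: the $k=1$ case is handled identically, and for $k>1$ the paper likewise writes $\Lambda_{\text{normal}} = S_x/(S_x+S_y)$ with $S_x, S_y$ the sums of exponentiated losses over the sampled normal and noisy items, conditions on the random counts, and applies a second-order Taylor (delta-method) expansion of $1/(S_x+S_y)$; your multivariate expansion of $(X_1-X_2)/Y$ is an equivalent reorganization of the paper's decomposition $\mathbb{E}[S_x]\,\mathbb{E}\left[1/(S_x+S_y)\right] + \mathrm{Cov}\left(S_x, 1/(S_x+S_y)\right)$. One small mischaracterization: the constant $C\in[\beta,\alpha]$ does not come from a mean-value bound on the Taylor remainder, but from the cruder ansatz $S_x+S_y \approx \exp(-x_i) + (k-1)C$, i.e., replacing the other $k-1$ denominator summands by a constant average weight and then expanding $1/(\exp(-x_i)+(k-1)C)$ to first order in $\exp(-x_i)$; that is where the $k/(k-1)^2$ factor originates.

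There are two concrete gaps. First, you average the counts over a hypergeometric distribution (sampling the $k$ candidates without replacement), whereas the paper models the number of normal items among the $k$ drawn as $\mathrm{Binomial}(k, n/(n+m))$; the factor $\gamma + \frac{m}{n+m}$ inside $\Gamma$ comes directly from the binomial count variance $knm/(n+m)^2$ entering $\mathrm{Var}[S_x]$ via the law of total variance. With hypergeometric counts this variance picks up the finite-population correction $\frac{n+m-k}{n+m-1}$, so your route, carried out faithfully, would not land on the stated constants. Second, your proposal stops at the level of a plan precisely where the content lies: you correctly identify that everything hinges on the second-order bookkeeping, but the claim that the $O(1/k)$ corrections "collapse to" exactly $\Gamma/k$ and $\chi k/(C^2(k-1)^2)$ is asserted rather than derived, and the split into a symmetric variance piece (handled by the paper's $\mathbb{E}[1/(X+Y)]$ approximation) and an antisymmetric covariance piece (handled by the linearization with $C$) is the step that actually produces the two distinct fluctuation terms.
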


The proof of Theorem~\ref{the:p_i_j} is detailed in Appendix~\ref{prf:the_p}. The term \(\frac{\Gamma}{k} - \frac{\chi}{C^2} \frac{k}{(k-1)^2}\) arises from the variance component in the denominator of the softmax function, exhibiting larger fluctuations when \(k\) is small, while stabilizing as \(k\) increases.


According to Theorem~\ref{the:p_i_j}, when \(k=1\), PLD reduces to standard training with \(\mathbb{E}[\Lambda_{\text{normal}} - \Lambda_{\text{noise}}] = \frac{n - m}{n + m}\). For \(k>1\), given \(\alpha, \beta, \gamma > 0\), with \(\alpha > \beta\) and \(n \gg m\), we find that \(\Gamma > \frac{\chi}{C^2}\). Thus, \(\mathbb{E}[\Lambda_{\text{normal}} - \Lambda_{\text{noise}}] > \frac{n - m}{n + m}\). This indicates that \textbf{PLD outperforms standard training, demonstrating superior denoising capabilities.}

To further enhance the effectiveness of the PLD method, we can increase \(\frac{n\alpha - m\beta}{(m+n)\eta}\). Specifically, let \(\xi = \frac{\beta}{\alpha} = \exp\left(g\left(\mu_1 - \mu_2\right)\right) < 1\), where $g(\cdot)$ is a monotonically increasing function. We can express \(\frac{n\alpha - m\beta}{(m+n)\eta} = \frac{n - \xi m}{n + \xi m}\). Notably, since \(\frac{\partial \frac{n\alpha - m\beta}{(m+n)\eta}}{\partial \xi} < 0\), we can decrease \(\xi\) to amplify \(\frac{n\alpha - m\beta}{(m+n)\eta}\), thus enlarging \(\mathbb{E}[\Lambda_{\text{normal}} - \Lambda_{\text{noise}}]\).
Based on this idea, we introduce a temperature coefficient \(\tau\) into Equation~\ref{eq:p_i}:
\begin{equation}
\label{eq:p_tau}
    P_{u, v} = \frac{\exp(-l_{u,v} / \tau)}{\sum_{j \in \mathcal{C}_{u}^{k}} \exp(-l_{u,j} / \tau)}.
\end{equation}

In this manner, the new \(\xi'\) can be considered as \(\xi' = \exp\left(g\left(\left(\mu_1 - \mu_2\right)/\tau\right)\right)\). By reducing \(\tau\), we can further enlarge \(\frac{n\alpha - m\beta}{(m+n)\eta}\). The algorithmic flow of PLD is outlined in Appendix~\ref{sec:app_methods} (Algorithm~\ref{al:dtr}).

Additionally, we perform an in-depth analysis and comparison of the time and space complexity of PLD and baseline methods. For further details, please refer to Appendix~\ref{sec:dis}.

\section{EXPERIMENTS}
In this section, we conduct extensive experiments to address the following research questions (\textbf{RQs}):
\begin{itemize}[leftmargin=*]
    \item \textbf{RQ1:} How does PLD perform compared to state-of-the-art denoising methods?
    \item \textbf{RQ2:} How well does PLD generalize, align with the theoretical analysis, and what is its time complexity?
    \item \textbf{RQ3:} How do the hyperparameters affect the performance of PLD?
\end{itemize}

\subsection{Experimental Setup}
\label{sec:exp_setup}
\begin{table}[t]
  \centering
    \caption{Dataset statistics}
    \resizebox{0.49\textwidth}{!}{

\begin{tabular}{lrrrrr}
    \toprule
    \textbf{ DATASET } & \textbf{ \#Users } & \textbf{ \#Items } & \textbf{\#Interactions}  & \textbf{Avg.Inter.} & \textbf{Sparsity}\\
    \midrule
     Gowalla  & 29,858 & 40,981& 1,027,370 & 34.4 & 99.92\% \\
     Yelp2018  & 31,668 & 38,048 & 1,561,406 & 49.3 & 99.88\% \\ 
     MIND  & 38,441 & 38,000 & 1,210,953 & 31.5 & 99.92\% \\ 
     MIND-Large  & 111,664 & 54,367 & 3,294,424 & 29.5 & 99.95\% \\
    \bottomrule
    \end{tabular}
    }
  \label{tab:datasets}%
\end{table}%

\begin{table*}[t]
    \centering
    \caption{Recommendation performance of different denoising methods. The highest scores are in bold, and the runner-ups are with underlines. A significant improvement over the runner-up is marked with * (i.e., two-sided t-test with $0.05 \le p < 0.1$) and ** (i.e., two-sided t-test with $p < 0.05$).}
    \resizebox{\textwidth}{!}{

\begin{tabular}{lcccccccccccc}
    \toprule
    \multicolumn{1}{c}{\multirow{3}{*}{\textbf{Model}}} & \multicolumn{4}{c}{\textbf{Gowalla} } & \multicolumn{4}{c}{\textbf{Yelp2018} } & \multicolumn{4}{c}{\textbf{MIND} }  \\
    \cmidrule(lr){2-5} \cmidrule(lr){6-9} \cmidrule(lr){10-13}
    & \multicolumn{2}{c}{\textbf{Recall} } & \multicolumn{2}{c}{\textbf{NDCG} } & \multicolumn{2}{c}{\textbf{Recall} } & \multicolumn{2}{c}{\textbf{NDCG} } & \multicolumn{2}{c}{\textbf{Recall} } & \multicolumn{2}{c}{\textbf{NDCG} } \\
    \cmidrule(lr){2-3} \cmidrule(lr){4-5} \cmidrule(lr){6-7} \cmidrule(lr){8-9} \cmidrule(lr){10-11} \cmidrule(lr){12-13}
    & \textbf{@20} & \textbf{@50} & \textbf{@20} & \textbf{@50} & \textbf{@20} & \textbf{@50} & \textbf{@20} & \textbf{@50} & \textbf{@20} & \textbf{@50} & \textbf{@20} & \textbf{@50} \\
    \midrule
    \textbf{MF}& 0.1486 & 0.2410 & 0.1073 & 0.1370 & 0.0621 & 0.1187 & 0.0483 & 0.0704 & 0.0658 & 0.1219 & 0.0430 & 0.0615 \\
    ~+\textbf{R-CE}& 0.1456 & 0.2362 & 0.1053 & 0.1343 &\underline{0.0654} &\underline{0.1239} & 0.0506 & 0.0733 &\underline{0.0716} & 0.1311 & 0.0468 & 0.0663 \\
    ~+\textbf{T-CE}& 0.1326 & 0.2197 & 0.0920 & 0.1197 & 0.0571 & 0.1113 & 0.0430 & 0.0639 & 0.0359 & 0.0812 & 0.0215 & 0.0363 \\
    ~+\textbf{DeCA}& 0.1463 & 0.2356 & 0.1068 & 0.1355 & 0.0645 & 0.1225 & 0.0502 & 0.0729 & 0.0714 &\underline{0.1312} & 0.0471 &\underline{0.0668} \\
    ~+\textbf{BOD}&\underline{0.1489} &\underline{0.2415} &\underline{0.1079} &\underline{0.1376} & 0.0654 & 0.1235 &\underline{0.0511} &\underline{0.0738} & 0.0713 & 0.1300 &\underline{0.0473} & 0.0665 \\
    ~+\textbf{DCF}& 0.1489 & 0.2413 & 0.1073 & 0.1367 & 0.0635 & 0.1208 & 0.0493 & 0.0715 & 0.0710 & 0.1297 & 0.0472 & 0.0665 \\
    \cmidrule(lr){2-13}
    ~+\textbf{PLD (ours)}&\textbf{0.1520**} &\textbf{0.2475**} &\textbf{0.1097} &\textbf{0.1404*} &\textbf{0.0677**} &\textbf{0.1264**} &\textbf{0.0527**} &\textbf{0.0755**} &\textbf{0.0769**} &\textbf{0.1379**} &\textbf{0.0513*} &\textbf{0.0713**} \\
    \multicolumn{1}{c}{Gain}& +2.04\% $\uparrow$& +2.47\% $\uparrow$& +1.71\% $\uparrow$& +1.98\% $\uparrow$& +3.46\% $\uparrow$& +2.02\% $\uparrow$& +3.09\% $\uparrow$& +2.33\% $\uparrow$& +7.36\% $\uparrow$& +5.09\% $\uparrow$& +8.46\% $\uparrow$& +6.83\% $\uparrow$\\
    \midrule
    \textbf{LightGCN}& 0.1553 & 0.2509 & 0.1142 & 0.1449 & 0.0665 & 0.1270 & 0.0516 & 0.0750 &\underline{0.0817} &\underline{0.1485} &\underline{0.0538} &\underline{0.0757} \\
    ~+\textbf{R-CE}& 0.1536 & 0.2481 & 0.1131 & 0.1434 & 0.0554 & 0.1042 & 0.0428 & 0.0617 & 0.0723 & 0.1315 & 0.0478 & 0.0670 \\
    ~+\textbf{T-CE}& 0.1146 & 0.1859 & 0.0859 & 0.1088 & 0.0532 & 0.1004 & 0.0412 & 0.0595 & 0.0674 & 0.1222 & 0.0447 & 0.0626 \\
    ~+\textbf{DeCA}& 0.1540 & 0.2495 & 0.1133 & 0.1440 &\underline{0.0678} &\underline{0.1298} &\underline{0.0526} &\underline{0.0766} & 0.0812 & 0.1480 & 0.0532 & 0.0751 \\
    ~+\textbf{BOD}&\underline{0.1560} &\underline{0.2519} &\underline{0.1154} &\underline{0.1461} & 0.0672 & 0.1280 & 0.0523 & 0.0758 & 0.0809 & 0.1475 & 0.0532 & 0.0750 \\
    ~+\textbf{DCF}& 0.1276 & 0.2072 & 0.0948 & 0.1203 & 0.0619 & 0.1180 & 0.0482 & 0.0699 & 0.0734 & 0.1342 & 0.0483 & 0.0681 \\
    \cmidrule(lr){2-13}
    ~+\textbf{PLD (ours)}&\textbf{0.1580**} &\textbf{0.2558**} &\textbf{0.1157} &\textbf{0.1472*} &\textbf{0.0693**} &\textbf{0.1325**} &\textbf{0.0538**} &\textbf{0.0783**} &\textbf{0.0837**} &\textbf{0.1516**} &\textbf{0.0551**} &\textbf{0.0774**} \\
    \multicolumn{1}{c}{Gain}& +1.23\% $\uparrow$& +1.53\% $\uparrow$& +0.32\% $\uparrow$& +0.71\% $\uparrow$& +2.31\% $\uparrow$& +2.02\% $\uparrow$& +2.44\% $\uparrow$& +2.22\% $\uparrow$& +2.43\% $\uparrow$& +2.06\% $\uparrow$& +2.47\% $\uparrow$& +2.33\% $\uparrow$\\
    \bottomrule
\end{tabular}
    }
\label{tab:performance}%
\end{table*}

\begin{table}[t]
    \centering
    \caption{Recommendation performance of different denoising methods on MIND-Large.}
    \resizebox{0.45\textwidth}{!}{

\begin{tabular}{lcccc}
    \toprule
    \multicolumn{1}{c}{\multirow{2}{*}{\textbf{Model}}} & \multicolumn{2}{c}{\textbf{Recall} } & \multicolumn{2}{c}{\textbf{NDCG} }\\
    \cmidrule(lr){2-3} \cmidrule(lr){4-5}
    & \textbf{@20} & \textbf{@50} & \textbf{@20} & \textbf{@50}\\
    \midrule
    \textbf{MF}& 0.0788 & 0.1441 & 0.0501 & 0.0710 \\
    ~+\textbf{R-CE}& 0.0790 & 0.1453 & 0.0502 & 0.0715 \\
    ~+\textbf{T-CE}& 0.0329 & 0.0788 & 0.0194 & 0.0340 \\
    ~+\textbf{DeCA}& 0.0793 & 0.1447 & 0.0507 & 0.0717 \\
    ~+\textbf{BOD}& 0.0794 & 0.1435 & 0.0516 & 0.0722 \\
    ~+\textbf{DCF}&\underline{0.0818} &\underline{0.1482} &\underline{0.0528} &\underline{0.0741} \\
    \cmidrule(lr){2-5}
    ~+\textbf{PLD (ours)}&\textbf{0.0846**} &\textbf{0.1524**} &\textbf{0.0543**} &\textbf{0.0760**} \\
    \multicolumn{1}{c}{Gain}& +3.36\% $\uparrow$& +2.85\% $\uparrow$& +2.67\% $\uparrow$& +2.58\% $\uparrow$\\
    \bottomrule
\end{tabular}
    }
\label{tab:mindl}%
\end{table}

\subsubsection{Datasets}
We utilize four widely recognized datasets: the \textbf{Gowalla} check-in dataset~\cite{liang2016modeling}, the \textbf{Yelp2018} business dataset, and the \textbf{MIND} and \textbf{MIND-Large} news recommendation datasets~\cite{wu2020mind}. The Gowalla and Yelp2018 datasets include all users, while for the MIND dataset, we sample two subsets of users, constructing MIND and MIND-Large, following~\cite{zhang2024lorec2}. Consistent with~\cite{zhang2024improving, zhangunderstanding}, we exclude users and items with fewer than 10 interactions from our analysis. We allocate 80\% of each user's historical interactions to the training set, reserving the remainder for testing. Additionally, 10\% of the training set is randomly selected to form a validation set for hyperparameter tuning. Detailed statistics for the datasets are summarized in Table~\ref{tab:datasets}.

\subsubsection{Baselines}
We incorporate various denoising methods, including four reweight-based approaches and one self-supervised method. Specifically, we evaluate R-CE, T-CE~\cite{wang2021denoising}, BOD~\cite{wang2023efficient}, and DCF~\cite{he2024double} as reweight-based methods, and DeCA~\cite{wang2022learning} as a self-supervised method.
\begin{itemize}[leftmargin=*]
   \item \textbf{R-CE}~\cite{wang2021denoising}: R-CE assigns reduced training weight to high-loss interactions.
   \item \textbf{T-CE}~\cite{wang2021denoising}: T-CE drops interactions with the highest loss values at a predefined drop rate.
   \item \textbf{BOD}~\cite{wang2023efficient}: BOD treats the process of determining interaction weights as a bi-level optimization problem to learn more effective denoising weights.
   \item \textbf{DCF}~\cite{he2024double}: DCF addresses the challenges posed by hard positive samples and the data sparsity introduced by dropping interactions in T-CE.
   \item \textbf{DeCA}~\cite{wang2022learning}: DeCA posits that clean samples tend to yield consistent predictions across different models, incorporating two recommendation models during training to better differentiate between clean and noisy interactions.
\end{itemize}

\subsubsection{Evaluation Metrics}
We adopt standard metrics widely employed in the field. The primary metrics for evaluating recommendation performance are the top-$k$ metrics: Recall at $K$ ($\mathrm{Recall}@K$) and Normalized Discounted Cumulative Gain at $K$ ($\mathrm{NDCG}@K$), as described in~\cite{zhang2023robust, he2020lightgcn, herlocker2004evaluating}. For evaluation, we set $K=20$ and $K=50$, following~\cite{wang2019neural, zhang2024improving}.

\subsubsection{Implementation Details} 
In our study, we employ two commonly used backbone recommendation models: MF~\cite{koren2009matrix} and LightGCN~\cite{he2020lightgcn}. The configuration of both denoising methods and recommendation models involves selecting a learning rate from \{0.1, 0.01, $\dots$, $1 \times 10^{-5}$\}, and a weight decay from \{0, 0.1, $\dots$, $1 \times 10^{-5}$\}. For PLD, the candidate pool size $k$ is selected from \{2, 3, 5, 10, 20\}, and the temperature coefficient $\tau$ is chosen from \{0.01, 0.05, 0.1, 0.2, 0.3, 0.4, 0.5\}. For the baselines, hyperparameter settings follow those specified in the original publications. Our implementation code is available at the following link\footnote{\url{https://github.com/Kaike-Zhang/PLD}}.

\subsection{Performance Comparison~(RQ1)}

\begin{figure*}[t]
    \centering
    \includegraphics[width=0.95\textwidth]{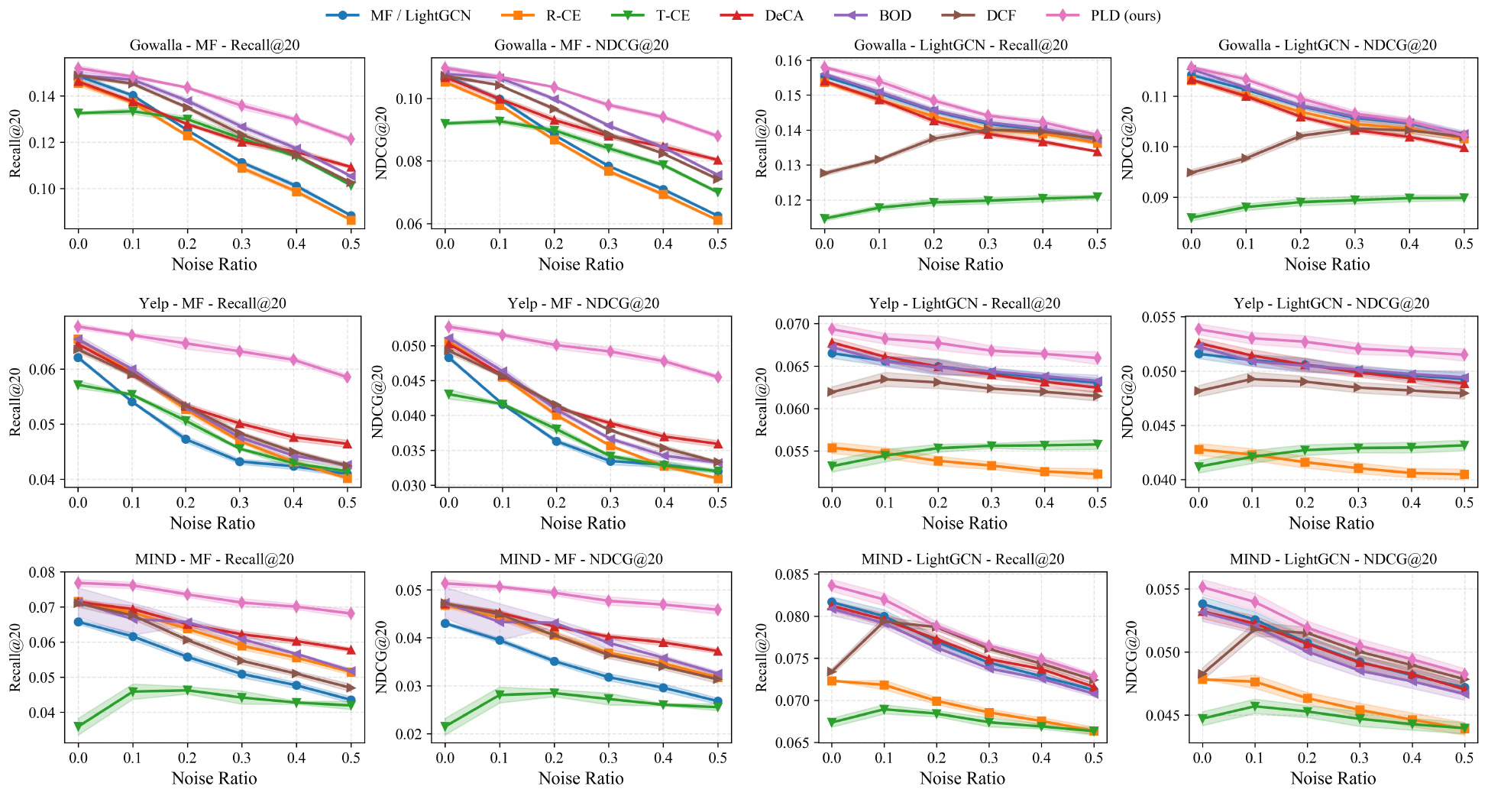}
    \caption{Recommendation performance of different denoising methods across various noise ratios.}
    \label{fig:noise_mind}
\end{figure*}

In this section, we address \textbf{RQ1} by focusing on two key aspects: recommendation performance and robustness against noise. All results in this section are based on the widely adopted BPR loss function~\cite{rendle2009bpr}. For a comprehensive evaluation, results using the BCE loss function are provided in Section~\ref{sec:rq2}.

\begin{table}[t]
    \centering
    \caption{Recommendation performance of different denoising methods on BCE loss on MIND.}
    \resizebox{0.45\textwidth}{!}{

\begin{tabular}{lcccc}
    \toprule
    \multicolumn{1}{c}{\multirow{2}{*}{\textbf{Model}}} & \multicolumn{2}{c}{\textbf{Recall} } & \multicolumn{2}{c}{\textbf{NDCG} }\\
    \cmidrule(lr){2-3} \cmidrule(lr){4-5}
    & \textbf{@20} & \textbf{@50} & \textbf{@20} & \textbf{@50}\\
    \midrule
    \textbf{MF}& 0.0585 & 0.1156 & 0.0337 & 0.0520 \\
    ~+\textbf{R-CE}& 0.0644 & 0.1235 & 0.0387 & 0.0581 \\
    ~+\textbf{T-CE}& 0.0612 & 0.1193 & 0.0361 & 0.0550 \\
    ~+\textbf{DeCA}& 0.0655 & 0.1250 & 0.0394 & 0.0588 \\
    ~+\textbf{BOD}& 0.0681 & 0.1243 &\underline{0.0442} &\underline{0.0624} \\
    ~+\textbf{DCF}&\underline{0.0683} &\underline{0.1319} & 0.0407 & 0.0615 \\
    \cmidrule(lr){2-5}
    ~+\textbf{PLD (ours)}&\textbf{0.0731**} &\textbf{0.1334} &\textbf{0.0464**} &\textbf{0.0663**} \\
    \multicolumn{1}{c}{Gain}& +6.98\% $\uparrow$& +1.16\% $\uparrow$& +4.94\% $\uparrow$& +6.13\% $\uparrow$\\
    \bottomrule
\end{tabular}
    }
\label{tab:bce_loss}%
\end{table}

\begin{figure}
    \centering
    \includegraphics[width=0.475\textwidth]{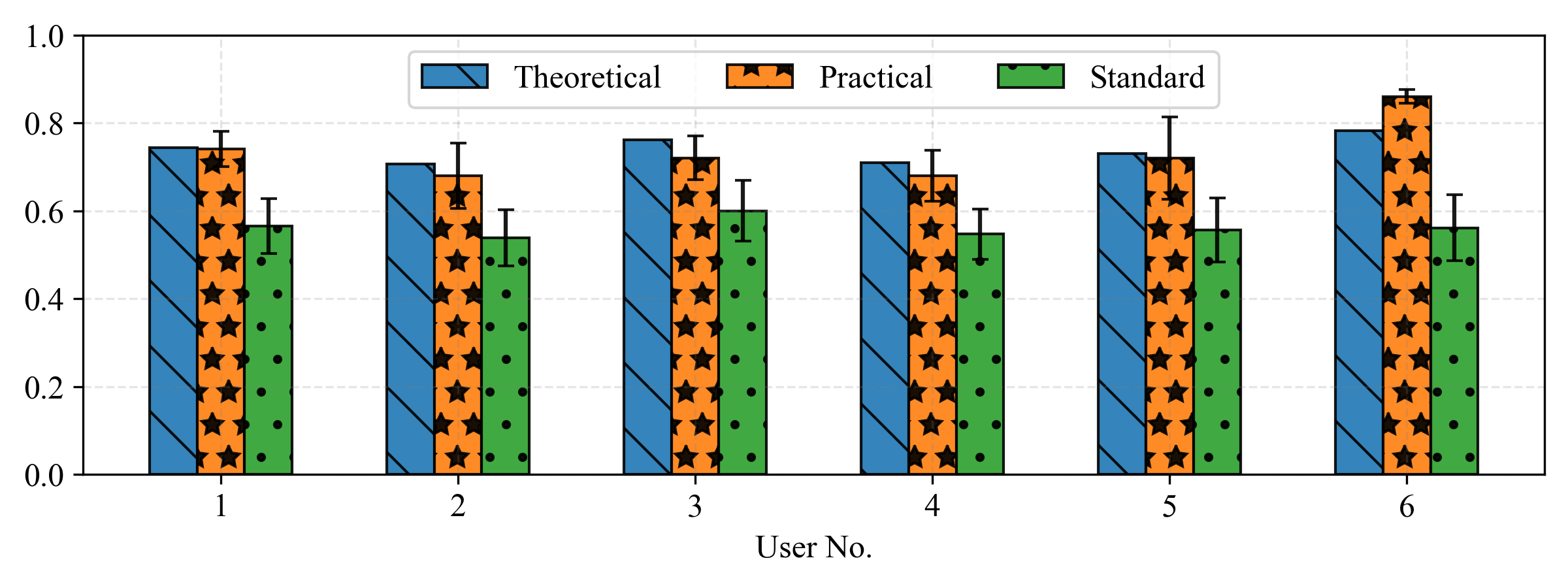}
    \caption{Theoretical, practical, and standard values (i.e., without resampling in PLD) of $\mathbb{E}[\Lambda_{\text{normal}} - \Lambda_{\text{noise}}]$ for 6 users on MIND with 30\% additional noise.}
    \label{fig:exception}
\end{figure}

\textbf{Recommendation Performance.}
We evaluate the effectiveness of PLD across three common datasets without introducing additional noise, as shown in Table~\ref{tab:performance}. The performance of R-CE~\cite{wang2021denoising}, T-CE~\cite{wang2021denoising}, and DCF~\cite{he2024double} is suboptimal due to the limitations of using overall loss distribution as a denoising criterion for pairwise loss functions, as discussed earlier (in Figure~\ref{fig:intro_loss}). In particular, T-CE applies a fixed drop ratio, which truncates part of the loss completely, unintentionally discarding many normal interactions and leading to a significant performance decrease.

On the other hand, DeCA~\cite{wang2022learning} and BOD~\cite{wang2023efficient} demonstrate more stable performance, securing runner-up results across several metrics. \textbf{Our method}, PLD, mitigates the impact of noisy interactions by resampling based on users' personal loss distributions, producing stable and optimal results across all datasets. It achieves significant improvements, with 4.29\% and 4.42\% increase in Recall@20 and NDCG@20, respectively, using MF as the backbone model.

\textbf{Robustness against Noise.}
We further assess PLD's robustness to noise by randomly introducing noisy interactions at ratios\footnote{A ratio of 0.1 means adding noisy interactions equal to $10\% \vert \mathcal{I}_{\text{normal}} \vert$.} ranging from 0.1 to 0.5, as shown in Figure~\ref{fig:noise_mind}. As the noise ratio increases, the performance of all methods decreases. Additionally, we observe that in some cases, specifically when using LightGCN as the backbone model, denoising methods based solely on overall loss distribution (T-CE, R-CE, and DCF) perform worse than the backbone model itself. This further confirms that overall loss distribution is unsuitable for denoising in pairwise loss scenarios.

In contrast, our method, PLD, remains the most stable across all noise ratios, consistently outperforming other denoising methods.
Additionally, we show the results on a larger dataset, MIND-Large (Table~\ref{tab:mindl}), where only the results at a noise ratio of 0.1 are presented due to space limitations. The conclusions drawn from MIND-Large are consistent with those from the other datasets.

Additionally, we present results for PLD combined with contrastive learning-based denoising methods in Appendix~\ref{sec:app_exp}, along with results under more challenging noise conditions.

\subsection{Argumentation Study~(RQ2)}
\label{sec:rq2}

\begin{figure}
    \centering
    \includegraphics[width=0.475\textwidth]{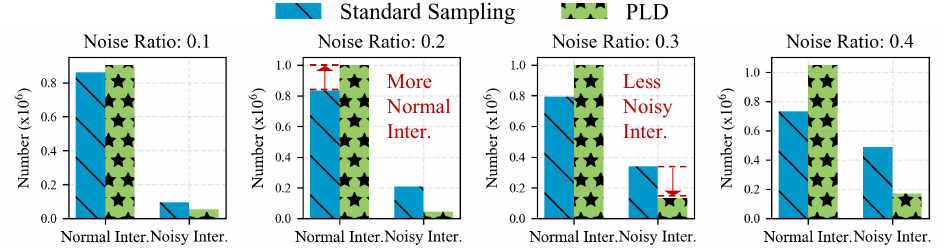}
    \caption{Number of normal interactions and noisy interactions sampled.}
    \label{fig:noise_num}
\end{figure}

\begin{figure*}[t]
    \centering
    \includegraphics[width=0.97\textwidth]{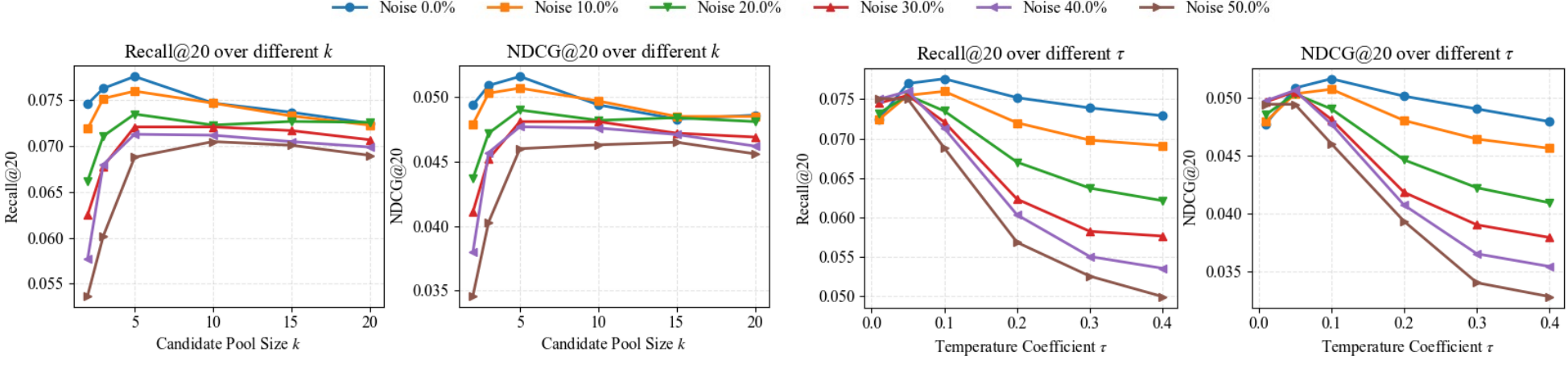}
    \caption{Left: Analysis of hyper-parameter $k$; Right: Analysis of hyper-parameter $\tau$.}
    \label{fig:hyper}
\end{figure*}

In this section, we address \textbf{RQ2} by evaluating the generalization of our method with the pointwise loss function, analyzing the consistency between Theorem~\ref{the:p_i_j} and practical results, and verifying the advantage of our method in terms of time complexity.

\textbf{Pointwise Loss Function.}  
To further demonstrate the generalization of PLD, we evaluate its performance using the pointwise loss function, specifically Binary Cross-Entropy (BCE) loss. Table~\ref{tab:bce_loss} presents the results with MF as the backbone model on the MIND dataset. Unlike the results with pairwise loss functions shown in Table~\ref{tab:performance}, all denoising methods show improvements under MF with the BCE loss function, particularly T-CE. Since these methods are originally designed with BCE loss in mind, they perform well with BCE but struggle to adapt to BPR loss. In contrast, our method, PLD, not only adapts but also achieves the best results with BCE loss, showing a 6.98\% improvement in Recall@20 and a 4.94\% improvement in NDCG@20.

\textbf{Theorem Validation.}
To evaluate the consistency between Theorem~\ref{the:p_i_j} and practical results, and thereby demonstrate the effectiveness of PLD, we examine the alignment between the theoretical value $\mathbb{E}[\Lambda_{\text{normal}} - \Lambda_{\text{noise}}]$ and its practical counterpart. We also compare these values to the probability values under a standard training process without resampling. Since Theorem~\ref{the:p_i_j} contains a constant $C \in [\alpha, \beta]$, we approximate it by setting $C = \frac{\alpha + \beta}{2}$ for probability calculations.

We randomly selecte 6 users from the dataset and use Equation~\ref{eq:the} to calculate $\mathbb{E}[\Lambda_{\text{normal}} - \Lambda_{\text{noise}}]$. Concurrently, we compute the practical value through 100 simulations of the sampling process in lines 5-7 of Algorithm~\ref{al:dtr}. Finally, we obtain the standard value by running 100 simulations using the standard training process without resampling.

As shown in Figure~\ref{fig:exception}, the theoretical value of $\mathbb{E}[\Lambda_{\text{normal}} - \Lambda_{\text{noise}}]$ closely aligns with the practical value, verifying the correctness of our theoretical analysis. Moreover, we observe that the practical value corresponding to PLD is significantly higher than the standard value, highlighting the effectiveness of our method.

Additionally, we compare the number of normal and noisy interactions sampled in each epoch for PLD and standard training, as shown in Figure~\ref{fig:noise_num}. PLD significantly reduces the number of noisy interactions sampled. 

\begin{figure}
    \centering
    \includegraphics[width=0.475\textwidth]{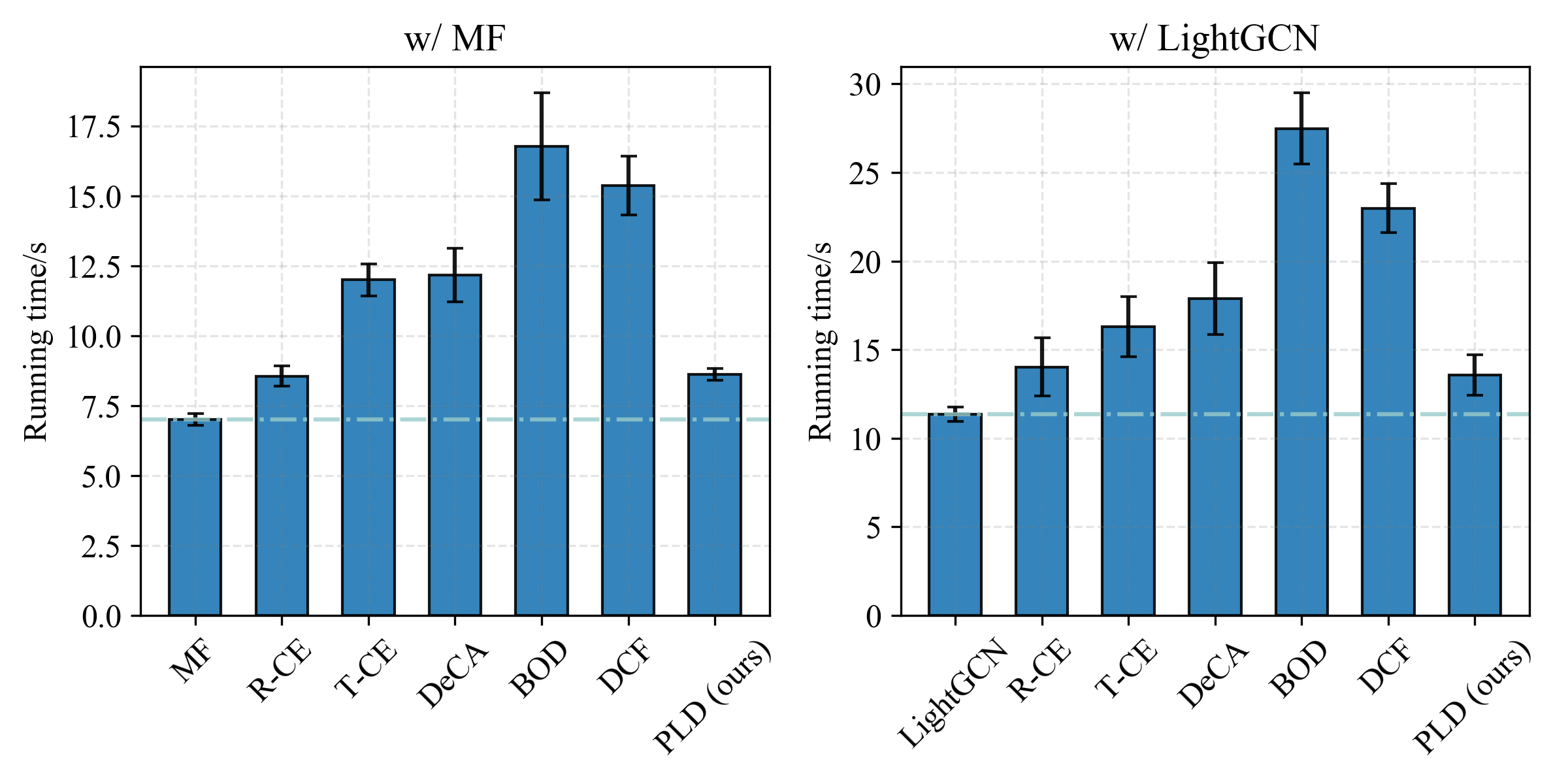}
    \caption{Training time per epoch (in seconds) with batch size 2048 on MIND.}
    \label{fig:time}
\end{figure}

\textbf{Time Complexity.}
To validate the advantage of our method in terms of time complexity, we compare the per-epoch runtime of baseline methods on the MIND dataset. Training is conducted on an RTX 4090, and we record the average training time over 100 epochs. To avoid GPU memory limitations, we standardize the batch size to 2048, which reduces the number of sorting operations within each batch, thereby lowering the time complexity for methods like T-CE~\cite{wang2021denoising} and DCF~\cite{he2024double} that rely on batch-level sorting.

As shown in Figure~\ref{fig:time}, BOD~\cite{wang2023efficient} incurs additional time costs due to the extra training required for the weight encoder and decoder. T-CE and DCF require sorting the loss within each batch, leading to higher time costs. DeCA~\cite{wang2022learning} involves training multiple models, further increasing time overhead. In contrast, both R-CE~\cite{wang2021denoising} and \textbf{our method}, PLD, do not significantly increase time complexity, as their time is close to that of the backbone model.

\subsection{Hyper-Parameters Analysis~(RQ3)}

In this section, we address \textbf{RQ3} by exploring the effects of hyperparameters on MIND with MF as the backbone model, specifically the candidate pool size $k$ and the temperature coefficient $\tau$. The results are shown in Figure~\ref{fig:hyper}.

\textbf{Analysis of Hyper-Parameter $k$.} With $\tau$ fixed at 0.1, we vary $k$ within the range $[2, 3, 5, 10, 15, 20]$. We observe that when the candidate pool size is too small, i.e., $k = 2$, the high sampling variance often results in the candidate pool being dominated by noisy samples. At $k = 5$, the method consistently achieves good performance across all noise ratios. Beyond $k = 10$, the performance stabilizes, showing minimal additional improvements.

\textbf{Analysis of Hyper-Parameter $\tau$.} With $k$ set to 5, we vary $\tau$ within the range $[0.01, 0.05, 0.1, 0.2, 0.3, 0.4]$. We find that when $\tau$ is large, i.e., $\tau \geq 0.2$, the performance of PLD fluctuates significantly. In contrast, when $\tau \leq 0.1$, the performance becomes more stable. Specifically, at $\tau = 0.05$, the results exhibit smaller variations across different noise ratios, indicating that PLD has a stronger denoising effect under this configuration.

\section{CONCLUSION}
In this research, we identify the limitations of denoising indicators used in current loss-based denoising methods, particularly the significant overlap between normal and noisy interactions in the overall loss distribution. Our analysis reveals a clear distinction between normal and noisy interactions in users' personal loss distributions. Building on these findings, we introduce a novel denoising strategy, PLD, which incorporates a resampling approach based on users' personal loss distributions. By selectively resampling training interactions, PLD effectively reduces the likelihood of noisy interactions being optimized. Additionally, we conduct a comprehensive theoretical analysis, demonstrating the robustness of PLD and suggesting potential ways to further enhance its performance. Extensive experimental results confirm the strong efficacy and robustness of PLD in denoising recommender systems.

\begin{acks}
This work is funded by the Strategic Priority Research Program of the Chinese Academy of Sciences under Grant No. XDB0680201, and the National Natural Science Foundation of China under Grant Nos. 62472409, 62272125, U21B2046. Huawei Shen is also supported by Beijing Academy of Artificial Intelligence (BAAI).
\end{acks}

\bibliographystyle{ACM-Reference-Format}
\bibliography{ref}


\begin{thebibliography}{46}


\ifx \showCODEN    \undefined \def \showCODEN     #1{\unskip}     \fi
\ifx \showDOI      \undefined \def \showDOI       #1{#1}\fi
\ifx \showISBNx    \undefined \def \showISBNx     #1{\unskip}     \fi
\ifx \showISBNxiii \undefined \def \showISBNxiii  #1{\unskip}     \fi
\ifx \showISSN     \undefined \def \showISSN      #1{\unskip}     \fi
\ifx \showLCCN     \undefined \def \showLCCN      #1{\unskip}     \fi
\ifx \shownote     \undefined \def \shownote      #1{#1}          \fi
\ifx \showarticletitle \undefined \def \showarticletitle #1{#1}   \fi
\ifx \showURL      \undefined \def \showURL       {\relax}        \fi
\providecommand\bibfield[2]{#2}
\providecommand\bibinfo[2]{#2}
\providecommand\natexlab[1]{#1}
\providecommand\showeprint[2][]{arXiv:#2}

\bibitem[Chung(2000)]%
        {chung2000course}
\bibfield{author}{\bibinfo{person}{Kai~Lai Chung}.} \bibinfo{year}{2000}\natexlab{}.
\newblock \bibinfo{booktitle}{\emph{A Course in Probability Theory}}.
\newblock \bibinfo{publisher}{Elsevier}.
\newblock


\bibitem[Covington et~al\mbox{.}(2016)]%
        {covington2016deep}
\bibfield{author}{\bibinfo{person}{Paul Covington}, \bibinfo{person}{Jay Adams}, {and} \bibinfo{person}{Emre Sargin}.} \bibinfo{year}{2016}\natexlab{}.
\newblock \showarticletitle{Deep Neural Networks for Youtube Recommendations}. In \bibinfo{booktitle}{\emph{Proceedings of the 10th ACM Conference on Recommender Systems}}. \bibinfo{pages}{191--198}.
\newblock


\bibitem[Fan et~al\mbox{.}(2023)]%
        {fan2023graph}
\bibfield{author}{\bibinfo{person}{Ziwei Fan}, \bibinfo{person}{Ke Xu}, \bibinfo{person}{Zhang Dong}, \bibinfo{person}{Hao Peng}, \bibinfo{person}{Jiawei Zhang}, {and} \bibinfo{person}{Philip~S Yu}.} \bibinfo{year}{2023}\natexlab{}.
\newblock \showarticletitle{Graph Collaborative Signals Denoising and Augmentation for Recommendation}. In \bibinfo{booktitle}{\emph{Proceedings of the 46th international ACM SIGIR Conference on Research and Development in Information Retrieval}}. \bibinfo{pages}{2037--2041}.
\newblock


\bibitem[Gantner et~al\mbox{.}(2012)]%
        {gantner2012personalized}
\bibfield{author}{\bibinfo{person}{Zeno Gantner}, \bibinfo{person}{Lucas Drumond}, \bibinfo{person}{Christoph Freudenthaler}, {and} \bibinfo{person}{Lars Schmidt-Thieme}.} \bibinfo{year}{2012}\natexlab{}.
\newblock \showarticletitle{Personalized Ranking for Non-uniformly Sampled Items}. In \bibinfo{booktitle}{\emph{Proceedings of KDD Cup 2011}}. PMLR, \bibinfo{pages}{231--247}.
\newblock


\bibitem[Gao et~al\mbox{.}(2022)]%
        {gao2022selfguided}
\bibfield{author}{\bibinfo{person}{Yunjun Gao}, \bibinfo{person}{Yuntao Du}, \bibinfo{person}{Yujia Hu}, \bibinfo{person}{Lu Chen}, \bibinfo{person}{Xinjun Zhu}, \bibinfo{person}{Ziquan Fang}, {and} \bibinfo{person}{Baihua Zheng}.} \bibinfo{year}{2022}\natexlab{}.
\newblock \showarticletitle{Self-{Guided} {Learning} to {Denoise} for {Robust} {Recommendation}}. In \bibinfo{booktitle}{\emph{Proceedings of the 45th {International} {ACM} {SIGIR} {Conference} on {Research} and {Development} in {Information} {Retrieval}}}. ACM, \bibinfo{pages}{1412--1422}.
\newblock


\bibitem[He et~al\mbox{.}(2020)]%
        {he2020lightgcn}
\bibfield{author}{\bibinfo{person}{Xiangnan He}, \bibinfo{person}{Kuan Deng}, \bibinfo{person}{Xiang Wang}, \bibinfo{person}{Yan Li}, \bibinfo{person}{YongDong Zhang}, {and} \bibinfo{person}{Meng Wang}.} \bibinfo{year}{2020}\natexlab{}.
\newblock \showarticletitle{LightGCN - {Simplifying} and {Powering} {Graph} {Convolution} {Network} for {Recommendation}}. In \bibinfo{booktitle}{\emph{Proceedings of the 43rd {International} {ACM} {SIGIR} {Conference} on {Research} and {Development} in {Information} {Retrieval}}}. \bibinfo{pages}{639--648}.
\newblock


\bibitem[He et~al\mbox{.}(2017)]%
        {he2017neural}
\bibfield{author}{\bibinfo{person}{Xiangnan He}, \bibinfo{person}{Lizi Liao}, \bibinfo{person}{Hanwang Zhang}, \bibinfo{person}{Liqiang Nie}, \bibinfo{person}{Xia Hu}, {and} \bibinfo{person}{Tat-Seng Chua}.} \bibinfo{year}{2017}\natexlab{}.
\newblock \showarticletitle{Neural Collaborative Filtering}. In \bibinfo{booktitle}{\emph{Proceedings of the 26th International Conference on World Wide Web}}. \bibinfo{pages}{173--182}.
\newblock


\bibitem[He et~al\mbox{.}(2024)]%
        {he2024double}
\bibfield{author}{\bibinfo{person}{Zhuangzhuang He}, \bibinfo{person}{Yifan Wang}, \bibinfo{person}{Yonghui Yang}, \bibinfo{person}{Peijie Sun}, \bibinfo{person}{Le Wu}, \bibinfo{person}{Haoyue Bai}, \bibinfo{person}{Jinqi Gong}, \bibinfo{person}{Richang Hong}, {and} \bibinfo{person}{Min Zhang}.} \bibinfo{year}{2024}\natexlab{}.
\newblock \showarticletitle{Double {Correction} {Framework} for {Denoising} {Recommendation}}. In \bibinfo{booktitle}{\emph{Proceedings of the 30th {ACM} {SIGKDD} {Conference} on {Knowledge} {Discovery} and {Data} {Mining}}}, Vol.~\bibinfo{volume}{33}. ACM, \bibinfo{pages}{1062--1072}.
\newblock


\bibitem[Herlocker et~al\mbox{.}(2004)]%
        {herlocker2004evaluating}
\bibfield{author}{\bibinfo{person}{Jonathan~L Herlocker}, \bibinfo{person}{Joseph~A Konstan}, \bibinfo{person}{Loren~G Terveen}, {and} \bibinfo{person}{John~T Riedl}.} \bibinfo{year}{2004}\natexlab{}.
\newblock \showarticletitle{Evaluating Collaborative Filtering Recommender Systems}.
\newblock \bibinfo{journal}{\emph{ACM Transactions on Information Systems (TOIS)}} \bibinfo{volume}{22}, \bibinfo{number}{1} (\bibinfo{year}{2004}), \bibinfo{pages}{5--53}.
\newblock


\bibitem[Hu et~al\mbox{.}(2008)]%
        {hu2008collaborative}
\bibfield{author}{\bibinfo{person}{Yifan Hu}, \bibinfo{person}{Yehuda Koren}, {and} \bibinfo{person}{Chris Volinsky}.} \bibinfo{year}{2008}\natexlab{}.
\newblock \showarticletitle{Collaborative Filtering for Implicit Feedback Datasets}. In \bibinfo{booktitle}{\emph{2008 8th IEEE International Conference on Data Mining}}. IEEE, \bibinfo{pages}{263--272}.
\newblock


\bibitem[Koren et~al\mbox{.}(2009)]%
        {koren2009matrix}
\bibfield{author}{\bibinfo{person}{Yehuda Koren}, \bibinfo{person}{Robert Bell}, {and} \bibinfo{person}{Chris Volinsky}.} \bibinfo{year}{2009}\natexlab{}.
\newblock \showarticletitle{Matrix Factorization Techniques for Recommender Systems}.
\newblock \bibinfo{journal}{\emph{Computer}} \bibinfo{volume}{42}, \bibinfo{number}{8} (\bibinfo{year}{2009}), \bibinfo{pages}{30--37}.
\newblock


\bibitem[Koren et~al\mbox{.}(2021)]%
        {koren2021advances}
\bibfield{author}{\bibinfo{person}{Yehuda Koren}, \bibinfo{person}{Steffen Rendle}, {and} \bibinfo{person}{Robert Bell}.} \bibinfo{year}{2021}\natexlab{}.
\newblock \showarticletitle{Advances in collaborative filtering}.
\newblock \bibinfo{journal}{\emph{Recommender systems handbook}} (\bibinfo{year}{2021}), \bibinfo{pages}{91--142}.
\newblock


\bibitem[Lehmann and Casella(2006)]%
        {lehmann2006theory}
\bibfield{author}{\bibinfo{person}{Erich~L Lehmann} {and} \bibinfo{person}{George Casella}.} \bibinfo{year}{2006}\natexlab{}.
\newblock \bibinfo{booktitle}{\emph{Theory of Point Estimation}}.
\newblock \bibinfo{publisher}{Springer Science \& Business Media}.
\newblock


\bibitem[Liang et~al\mbox{.}(2016)]%
        {liang2016modeling}
\bibfield{author}{\bibinfo{person}{Dawen Liang}, \bibinfo{person}{Laurent Charlin}, \bibinfo{person}{James McInerney}, {and} \bibinfo{person}{David~M Blei}.} \bibinfo{year}{2016}\natexlab{}.
\newblock \showarticletitle{Modeling User Exposure in Recommendation}. In \bibinfo{booktitle}{\emph{Proceedings of the 25th International Conference on World Wide Web}}. \bibinfo{pages}{951--961}.
\newblock


\bibitem[Lin et~al\mbox{.}(2023)]%
        {lin2023autodenoise}
\bibfield{author}{\bibinfo{person}{Weilin Lin}, \bibinfo{person}{Xiangyu Zhao}, \bibinfo{person}{Yejing Wang}, \bibinfo{person}{Yuanshao Zhu}, {and} \bibinfo{person}{Wanyu Wang}.} \bibinfo{year}{2023}\natexlab{}.
\newblock \showarticletitle{Autodenoise: Automatic data instance denoising for recommendations}. In \bibinfo{booktitle}{\emph{Proceedings of the ACM Web Conference 2023}}. \bibinfo{pages}{1003--1011}.
\newblock


\bibitem[Ma et~al\mbox{.}(2024)]%
        {ma2024madm}
\bibfield{author}{\bibinfo{person}{Wenze Ma}, \bibinfo{person}{Yuexian Wang}, \bibinfo{person}{Yanmin Zhu}, \bibinfo{person}{Zhaobo Wang}, \bibinfo{person}{Mengyuan Jing}, \bibinfo{person}{Xuhao Zhao}, \bibinfo{person}{Jiadi Yu}, {and} \bibinfo{person}{Feilong Tang}.} \bibinfo{year}{2024}\natexlab{}.
\newblock \showarticletitle{MADM: A Model-agnostic Denoising Module for Graph-based Social Recommendation}. In \bibinfo{booktitle}{\emph{Proceedings of the 17th ACM International Conference on Web Search and Data Mining}}. \bibinfo{pages}{501--509}.
\newblock


\bibitem[Quan et~al\mbox{.}(2023)]%
        {quan2023robust}
\bibfield{author}{\bibinfo{person}{Yuhan Quan}, \bibinfo{person}{Jingtao Ding}, \bibinfo{person}{Chen Gao}, \bibinfo{person}{Lingling Yi}, \bibinfo{person}{Depeng Jin}, {and} \bibinfo{person}{Yong Li}.} \bibinfo{year}{2023}\natexlab{}.
\newblock \showarticletitle{Robust Preference-guided Denoising for Graph Based Social Recommendation}. In \bibinfo{booktitle}{\emph{Proceedings of the 32nd International Conference on World Wide Web}}. \bibinfo{pages}{1097--1108}.
\newblock


\bibitem[Ren et~al\mbox{.}(2023)]%
        {ren2023disentangled}
\bibfield{author}{\bibinfo{person}{Xubin Ren}, \bibinfo{person}{Lianghao Xia}, \bibinfo{person}{Jiashu Zhao}, \bibinfo{person}{Dawei Yin}, {and} \bibinfo{person}{Chao Huang}.} \bibinfo{year}{2023}\natexlab{}.
\newblock \showarticletitle{Disentangled contrastive collaborative filtering}. In \bibinfo{booktitle}{\emph{Proceedings of the 46th International ACM SIGIR Conference on Research and Development in Information Retrieval}}. \bibinfo{pages}{1137--1146}.
\newblock


\bibitem[Rendle et~al\mbox{.}(2009)]%
        {rendle2009bpr}
\bibfield{author}{\bibinfo{person}{Steffen Rendle}, \bibinfo{person}{Christoph Freudenthaler}, \bibinfo{person}{Zeno Gantner}, {and} \bibinfo{person}{Lars Schmidt-Thieme}.} \bibinfo{year}{2009}\natexlab{}.
\newblock \showarticletitle{BPR: Bayesian Personalized Ranking from Implicit Feedback}. In \bibinfo{booktitle}{\emph{Proceedings of the Twenty-Fifth Conference on Uncertainty in Artificial Intelligence}}. \bibinfo{pages}{452--461}.
\newblock


\bibitem[Rice and Rice(2007)]%
        {rice2007mathematical}
\bibfield{author}{\bibinfo{person}{John~A Rice} {and} \bibinfo{person}{John~A Rice}.} \bibinfo{year}{2007}\natexlab{}.
\newblock \bibinfo{booktitle}{\emph{Mathematical Statistics and Data Analysis}}. Vol.~\bibinfo{volume}{371}.
\newblock \bibinfo{publisher}{Thomson/Brooks/Cole Belmont, CA}.
\newblock


\bibitem[Saito et~al\mbox{.}(2020)]%
        {saito2020unbiased}
\bibfield{author}{\bibinfo{person}{Yuta Saito}, \bibinfo{person}{Suguru Yaginuma}, \bibinfo{person}{Yuta Nishino}, \bibinfo{person}{Hayato Sakata}, {and} \bibinfo{person}{Kazuhide Nakata}.} \bibinfo{year}{2020}\natexlab{}.
\newblock \showarticletitle{Unbiased recommender learning from missing-not-at-random implicit feedback}. In \bibinfo{booktitle}{\emph{Proceedings of the 13th International Conference on Web Search and Data Mining}}. \bibinfo{pages}{501--509}.
\newblock


\bibitem[Schafer et~al\mbox{.}(2007)]%
        {schafer2007collaborative}
\bibfield{author}{\bibinfo{person}{J~Ben Schafer}, \bibinfo{person}{Dan Frankowski}, \bibinfo{person}{Jon Herlocker}, {and} \bibinfo{person}{Shilad Sen}.} \bibinfo{year}{2007}\natexlab{}.
\newblock \showarticletitle{Collaborative Filtering Recommender Systems}.
\newblock In \bibinfo{booktitle}{\emph{The Adaptive Web: Methods and Strategies of Web Personalization}}. \bibinfo{publisher}{Springer}, \bibinfo{pages}{291--324}.
\newblock


\bibitem[Tian et~al\mbox{.}(2022)]%
        {tian2022learning}
\bibfield{author}{\bibinfo{person}{Changxin Tian}, \bibinfo{person}{Yuexiang Xie}, \bibinfo{person}{Yaliang Li}, \bibinfo{person}{Nan Yang}, {and} \bibinfo{person}{Wayne~Xin Zhao}.} \bibinfo{year}{2022}\natexlab{}.
\newblock \showarticletitle{Learning to {Denoise} {Unreliable} {Interactions} for {Graph} {Collaborative} {Filtering}}. In \bibinfo{booktitle}{\emph{Proceedings of the 45th {International} {ACM} {SIGIR} {Conference} on {Research} and {Development} in {Information} {Retrieval}}}. ACM, \bibinfo{pages}{122--132}.
\newblock


\bibitem[Toledo et~al\mbox{.}(2016)]%
        {toledo2016fuzzy}
\bibfield{author}{\bibinfo{person}{Raciel~Yera Toledo}, \bibinfo{person}{Jorge Castro}, {and} \bibinfo{person}{Luis Mart{\' i}nez-L{\' o}pez}.} \bibinfo{year}{2016}\natexlab{}.
\newblock \showarticletitle{A fuzzy Model for Managing Natural Noise in Recommender Systems}.
\newblock \bibinfo{journal}{\emph{Applied Soft Computing}}  \bibinfo{volume}{40} (\bibinfo{year}{2016}), \bibinfo{pages}{187--198}.
\newblock


\bibitem[Wang et~al\mbox{.}(2015)]%
        {wang2015collaborative}
\bibfield{author}{\bibinfo{person}{Hao Wang}, \bibinfo{person}{Naiyan Wang}, {and} \bibinfo{person}{Dit-Yan Yeung}.} \bibinfo{year}{2015}\natexlab{}.
\newblock \showarticletitle{Collaborative Deep Learning for Recommender Systems}. In \bibinfo{booktitle}{\emph{Proceedings of the 21th ACM SIGKDD International Conference on Knowledge Discovery and Data Mining}}. \bibinfo{pages}{1235--1244}.
\newblock


\bibitem[Wang et~al\mbox{.}(2023b)]%
        {wang2023tutorial}
\bibfield{author}{\bibinfo{person}{Pengfei Wang}, \bibinfo{person}{Chenliang Li}, \bibinfo{person}{Lixin Zou}, \bibinfo{person}{Zhichao Feng}, \bibinfo{person}{Kaiyuan Li}, \bibinfo{person}{Xiaochen Li}, \bibinfo{person}{Xialong Liu}, {and} \bibinfo{person}{Shangguang Wang}.} \bibinfo{year}{2023}\natexlab{b}.
\newblock \showarticletitle{Tutorial: Data Denoising Metrics in Recommender Systems}. In \bibinfo{booktitle}{\emph{Proceedings of the 32nd ACM International Conference on Information and Knowledge Management}}. \bibinfo{pages}{5224--5227}.
\newblock


\bibitem[Wang et~al\mbox{.}(2021)]%
        {wang2021denoising}
\bibfield{author}{\bibinfo{person}{Wenjie Wang}, \bibinfo{person}{Fuli Feng}, \bibinfo{person}{Xiangnan He}, \bibinfo{person}{Liqiang Nie}, {and} \bibinfo{person}{Tat-Seng Chua}.} \bibinfo{year}{2021}\natexlab{}.
\newblock \showarticletitle{Denoising {Implicit} {Feedback} for {Recommendation}}. In \bibinfo{booktitle}{\emph{Proceedings of the 14th {ACM} {International} {Conference} on {Web} {Search} and {Data} {Mining}}}. ACM, \bibinfo{pages}{373--381}.
\newblock


\bibitem[Wang et~al\mbox{.}(2019)]%
        {wang2019neural}
\bibfield{author}{\bibinfo{person}{Xiang Wang}, \bibinfo{person}{Xiangnan He}, \bibinfo{person}{Meng Wang}, \bibinfo{person}{Fuli Feng}, {and} \bibinfo{person}{Tat-Seng Chua}.} \bibinfo{year}{2019}\natexlab{}.
\newblock \showarticletitle{Neural Graph Collaborative Filtering}. In \bibinfo{booktitle}{\emph{Proceedings of the 42nd International ACM SIGIR Conference on Research and Development in Information Retrieval}}. \bibinfo{pages}{165--174}.
\newblock


\bibitem[Wang et~al\mbox{.}(2020)]%
        {wang2020disentangled}
\bibfield{author}{\bibinfo{person}{Xiang Wang}, \bibinfo{person}{Hongye Jin}, \bibinfo{person}{An Zhang}, \bibinfo{person}{Xiangnan He}, \bibinfo{person}{Tong Xu}, {and} \bibinfo{person}{Tat-Seng Chua}.} \bibinfo{year}{2020}\natexlab{}.
\newblock \showarticletitle{Disentangled Graph Collaborative Filtering}. In \bibinfo{booktitle}{\emph{Proceedings of the 43rd International ACM SIGIR Conference on Research and Development in Information Retrieval}}. \bibinfo{pages}{1001--1010}.
\newblock


\bibitem[Wang et~al\mbox{.}(2022)]%
        {wang2022learning}
\bibfield{author}{\bibinfo{person}{Yu Wang}, \bibinfo{person}{Xin Xin}, \bibinfo{person}{Zaiqiao Meng}, \bibinfo{person}{Joemon~M Jose}, \bibinfo{person}{Fuli Feng}, {and} \bibinfo{person}{Xiangnan He}.} \bibinfo{year}{2022}\natexlab{}.
\newblock \showarticletitle{Learning {Robust} {Recommenders} through {Cross}-{Model} {Agreement}}. In \bibinfo{booktitle}{\emph{Proceedings of the {ACM} {Web} {Conference} 2022}}. ACM, \bibinfo{pages}{2015--2025}.
\newblock


\bibitem[Wang et~al\mbox{.}(2023a)]%
        {wang2023efficient}
\bibfield{author}{\bibinfo{person}{Zongwei Wang}, \bibinfo{person}{Min Gao}, \bibinfo{person}{Wentao Li}, \bibinfo{person}{Junliang Yu}, \bibinfo{person}{Linxin Guo}, {and} \bibinfo{person}{Hongzhi Yin}.} \bibinfo{year}{2023}\natexlab{a}.
\newblock \showarticletitle{Efficient {Bi}-{Level} {Optimization} for {Recommendation} {Denoising}}. In \bibinfo{booktitle}{\emph{Proceedings of the 29th {ACM} {SIGKDD} {Conference} on {Knowledge} {Discovery} and {Data} {Mining}}}. ACM, \bibinfo{pages}{2502--2511}.
\newblock


\bibitem[Wu et~al\mbox{.}(2020)]%
        {wu2020mind}
\bibfield{author}{\bibinfo{person}{Fangzhao Wu}, \bibinfo{person}{Ying Qiao}, \bibinfo{person}{Jiun-Hung Chen}, \bibinfo{person}{Chuhan Wu}, \bibinfo{person}{Tao Qi}, \bibinfo{person}{Jianxun Lian}, \bibinfo{person}{Danyang Liu}, \bibinfo{person}{Xing Xie}, \bibinfo{person}{Jianfeng Gao}, \bibinfo{person}{Winnie Wu}, {et~al\mbox{.}}} \bibinfo{year}{2020}\natexlab{}.
\newblock \showarticletitle{Mind: A Large-scale Dataset for News Recommendation}. In \bibinfo{booktitle}{\emph{Proceedings of the 58th Annual Meeting of the Association for Computational Linguistics}}. \bibinfo{pages}{3597--3606}.
\newblock


\bibitem[Wu et~al\mbox{.}(2021)]%
        {wu2021selfsupervised}
\bibfield{author}{\bibinfo{person}{Jiancan Wu}, \bibinfo{person}{Xiang Wang}, \bibinfo{person}{Fuli Feng}, \bibinfo{person}{Xiangnan He}, \bibinfo{person}{Liang Chen}, \bibinfo{person}{Jianxun Lian}, {and} \bibinfo{person}{Xing Xie}.} \bibinfo{year}{2021}\natexlab{}.
\newblock \showarticletitle{Self-supervised {Graph} {Learning} for {Recommendation}}. In \bibinfo{booktitle}{\emph{Proceedings of the 44th {International} {ACM} {SIGIR} {Conference} on {Research} and {Development} in {Information} {Retrieval}}}. \bibinfo{pages}{726--735}.
\newblock


\bibitem[Wu et~al\mbox{.}(2022)]%
        {wu2022graph}
\bibfield{author}{\bibinfo{person}{Shiwen Wu}, \bibinfo{person}{Fei Sun}, \bibinfo{person}{Wentao Zhang}, \bibinfo{person}{Xu Xie}, {and} \bibinfo{person}{Bin Cui}.} \bibinfo{year}{2022}\natexlab{}.
\newblock \showarticletitle{Graph Neural Networks in Recommender Systems: A Survey}.
\newblock \bibinfo{journal}{\emph{Comput. Surveys}} \bibinfo{volume}{55}, \bibinfo{number}{5} (\bibinfo{year}{2022}), \bibinfo{pages}{1--37}.
\newblock


\bibitem[Wu et~al\mbox{.}(2016)]%
        {wu2016collaborative}
\bibfield{author}{\bibinfo{person}{Yao Wu}, \bibinfo{person}{Christopher DuBois}, \bibinfo{person}{Alice~X Zheng}, {and} \bibinfo{person}{Martin Ester}.} \bibinfo{year}{2016}\natexlab{}.
\newblock \showarticletitle{Collaborative Denoising Auto-encoders for Top-n Recommender Systems}. In \bibinfo{booktitle}{\emph{Proceedings of the 9th ACM International Conference on Web Search and Data Mining}}. \bibinfo{pages}{153--162}.
\newblock


\bibitem[Xia et~al\mbox{.}(2022)]%
        {xia2022hypergraph}
\bibfield{author}{\bibinfo{person}{Lianghao Xia}, \bibinfo{person}{Chao Huang}, \bibinfo{person}{Yong Xu}, \bibinfo{person}{Jiashu Zhao}, \bibinfo{person}{Dawei Yin}, {and} \bibinfo{person}{Jimmy Huang}.} \bibinfo{year}{2022}\natexlab{}.
\newblock \showarticletitle{Hypergraph Contrastive Collaborative Filtering}. In \bibinfo{booktitle}{\emph{Proceedings of the 45th International ACM SIGIR Conference on Research and Development in Information Retrieval}}. \bibinfo{pages}{70--79}.
\newblock


\bibitem[Yang et~al\mbox{.}(2022)]%
        {yang2022knowledge}
\bibfield{author}{\bibinfo{person}{Yuhao Yang}, \bibinfo{person}{Chao Huang}, \bibinfo{person}{Lianghao Xia}, {and} \bibinfo{person}{Chenliang Li}.} \bibinfo{year}{2022}\natexlab{}.
\newblock \showarticletitle{Knowledge {Graph} {Contrastive} {Learning} for {Recommendation}}. In \bibinfo{booktitle}{\emph{Proceedings of the 45th {International} {ACM} {SIGIR} {Conference} on {Research} and {Development} in {Information} {Retrieval}}}. ACM, \bibinfo{pages}{1434--1443}.
\newblock


\bibitem[Ye et~al\mbox{.}(2023)]%
        {ye2023towards}
\bibfield{author}{\bibinfo{person}{Haibo Ye}, \bibinfo{person}{Xinjie Li}, \bibinfo{person}{Yuan Yao}, {and} \bibinfo{person}{Hanghang Tong}.} \bibinfo{year}{2023}\natexlab{}.
\newblock \showarticletitle{Towards {Robust} {Neural} {Graph} {Collaborative} {Filtering} via {Structure} {Denoising} and {Embedding} {Perturbation}}.
\newblock \bibinfo{journal}{\emph{ACM Transactions on Information Systems}} \bibinfo{volume}{41}, \bibinfo{number}{3} (\bibinfo{year}{2023}), \bibinfo{pages}{1--28}.
\newblock


\bibitem[Ying et~al\mbox{.}(2018)]%
        {ying2018graph}
\bibfield{author}{\bibinfo{person}{Rex Ying}, \bibinfo{person}{Ruining He}, \bibinfo{person}{Kaifeng Chen}, \bibinfo{person}{Pong Eksombatchai}, \bibinfo{person}{William~L Hamilton}, {and} \bibinfo{person}{Jure Leskovec}.} \bibinfo{year}{2018}\natexlab{}.
\newblock \showarticletitle{Graph Convolutional Neural Networks for Web-scale Recommender Systems}. In \bibinfo{booktitle}{\emph{Proceedings of the 24th ACM SIGKDD International Conference on Knowledge Discovery and Data Mining}}. \bibinfo{pages}{974--983}.
\newblock


\bibitem[Yu and Qin(2020)]%
        {yu2020sampler}
\bibfield{author}{\bibinfo{person}{Wenhui Yu} {and} \bibinfo{person}{Zheng Qin}.} \bibinfo{year}{2020}\natexlab{}.
\newblock \showarticletitle{Sampler Design for Implicit Feedback Data by Noisy-label Robust Learning}. In \bibinfo{booktitle}{\emph{Proceedings of the 43rd International ACM SIGIR Conference on Research and Development in Information Retrieval}}. \bibinfo{pages}{861--870}.
\newblock


\bibitem[Zhang et~al\mbox{.}(2023)]%
        {zhang2023robust}
\bibfield{author}{\bibinfo{person}{Kaike Zhang}, \bibinfo{person}{Qi Cao}, \bibinfo{person}{Fei Sun}, \bibinfo{person}{Yunfan Wu}, \bibinfo{person}{Shuchang Tao}, \bibinfo{person}{Huawei Shen}, {and} \bibinfo{person}{Xueqi Cheng}.} \bibinfo{year}{2023}\natexlab{}.
\newblock \showarticletitle{Robust Recommender System: A Survey and Future Directions}.
\newblock \bibinfo{journal}{\emph{arXiv preprint arXiv:2309.02057}} (\bibinfo{year}{2023}).
\newblock


\bibitem[Zhang et~al\mbox{.}(2024a)]%
        {zhang2024improving}
\bibfield{author}{\bibinfo{person}{Kaike Zhang}, \bibinfo{person}{Qi Cao}, \bibinfo{person}{Yunfan Wu}, \bibinfo{person}{Fei Sun}, \bibinfo{person}{Huawei Shen}, {and} \bibinfo{person}{Xueqi Cheng}.} \bibinfo{year}{2024}\natexlab{a}.
\newblock \showarticletitle{Improving the shortest plank: Vulnerability-aware adversarial training for robust recommender system}. In \bibinfo{booktitle}{\emph{Proceedings of the 18th ACM Conference on Recommender Systems}}. \bibinfo{pages}{680--689}.
\newblock


\bibitem[Zhang et~al\mbox{.}(2024b)]%
        {zhang2024lorec2}
\bibfield{author}{\bibinfo{person}{Kaike Zhang}, \bibinfo{person}{Qi Cao}, \bibinfo{person}{Yunfan Wu}, \bibinfo{person}{Fei Sun}, \bibinfo{person}{Huawei Shen}, {and} \bibinfo{person}{Xueqi Cheng}.} \bibinfo{year}{2024}\natexlab{b}.
\newblock \showarticletitle{LoRec: Combating Poisons with Large Language Model for Robust Sequential Recommendation}. In \bibinfo{booktitle}{\emph{Proceedings of the 47th International ACM SIGIR Conference on Research and Development in Information Retrieval}}. \bibinfo{pages}{1733--1742}.
\newblock


\bibitem[Zhang et~al\mbox{.}(2024c)]%
        {zhangunderstanding}
\bibfield{author}{\bibinfo{person}{Kaike Zhang}, \bibinfo{person}{Qi Cao}, \bibinfo{person}{Yunfan Wu}, \bibinfo{person}{Fei Sun}, \bibinfo{person}{Huawei Shen}, {and} \bibinfo{person}{Xueqi Cheng}.} \bibinfo{year}{2024}\natexlab{c}.
\newblock \showarticletitle{Understanding and Improving Adversarial Collaborative Filtering for Robust Recommendation}. In \bibinfo{booktitle}{\emph{The Thirty-eighth Annual Conference on Neural Information Processing Systems}}.
\newblock


\bibitem[Zhao et~al\mbox{.}(2024)]%
        {zhao2024denoising}
\bibfield{author}{\bibinfo{person}{Jujia Zhao}, \bibinfo{person}{Wang Wenjie}, \bibinfo{person}{Yiyan Xu}, \bibinfo{person}{Teng Sun}, \bibinfo{person}{Fuli Feng}, {and} \bibinfo{person}{Tat-Seng Chua}.} \bibinfo{year}{2024}\natexlab{}.
\newblock \showarticletitle{Denoising Diffusion Recommender Model}. In \bibinfo{booktitle}{\emph{Proceedings of the 47th International ACM SIGIR Conference on Research and Development in Information Retrieval}}. \bibinfo{pages}{1370--1379}.
\newblock


\bibitem[Zhu et~al\mbox{.}(2023)]%
        {zhu2023knowledge}
\bibfield{author}{\bibinfo{person}{Xinjun Zhu}, \bibinfo{person}{Yuntao Du}, \bibinfo{person}{Yuren Mao}, \bibinfo{person}{Lu Chen}, \bibinfo{person}{Yujia Hu}, {and} \bibinfo{person}{Yunjun Gao}.} \bibinfo{year}{2023}\natexlab{}.
\newblock \showarticletitle{Knowledge-refined Denoising Network for Robust Recommendation}. In \bibinfo{booktitle}{\emph{Proceedings of the 46th International ACM SIGIR Conference on Research and Development in Information Retrieval}}. \bibinfo{pages}{362--371}.
\newblock


\end{thebibliography}

\clearpage
\appendix
\section{APPENDIX}

\subsection{Proofs}
\label{App:prof}

\textbf{For clarity}, we hypothesize the distributions followed by the normal interaction loss and noisy interaction loss. Specifically, we assume that the loss of the user's normal interactions follows a Gaussian distribution \( \mathcal{N}(\mu_1, \sigma^2) \), while the loss of noisy interactions follows a Gaussian distribution \( \mathcal{N}(\mu_2, \sigma^2) \), where \( \mu_1 < \mu_2 \) and \( \mu_1, \mu_2 > \sigma \). Thus, we have
\begin{equation*}
    \alpha = \exp\left(-\mu_1 + \frac{\sigma^2}{2}\right), 
                 \quad \beta = \exp\left(-\mu_2 + \frac{\sigma^2}{2}\right).
\end{equation*}

Derivations based on different distributions are similar to the following and do not affect the final theoretical results.

\begin{proposition}
\label{Pro:1}
    Let $x_i \sim \mathcal{N}(\mu, \sigma^2)$ and $N \sim \mathrm{Binomial}(k, \frac{n}{n+m})$. Define 
    \begin{equation*}
        S_x = \sum_{i=1}^N \exp(-x_i).
    \end{equation*}
    Then, the expected value and variance of $S_x$ are given by:
    \begin{equation*}
        \begin{aligned}
            \mathbb{E}[S_x] &= \frac{kn}{m+n} \exp\left(-\mu + \frac{\sigma^2}{2}\right), \\
            \mathrm{Var}[S_x] &= \frac{kn}{n+m} \exp(-2\mu + \sigma^2) \left( \exp(\sigma^2) - \frac{n}{n+m} \right).
        \end{aligned}
    \end{equation*}
\end{proposition}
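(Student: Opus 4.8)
\textbf{Proof proposal for Proposition~\ref{Pro:1}.}

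The plan is to compute $\mathbb{E}[S_x]$ and $\mathrm{Var}[S_x]$ by conditioning on the random number $N$ of normal interactions that land in the candidate pool, and then to exploit the fact that, given $N$, the summands $\exp(-x_i)$ are i.i.d.\ log-normal random variables. The key elementary input is that if $x \sim \mathcal{N}(\mu,\sigma^2)$ then $\exp(-x)$ is log-normal with $\mathbb{E}[\exp(-x)] = \exp(-\mu + \tfrac{\sigma^2}{2})$ and $\mathbb{E}[\exp(-2x)] = \exp(-2\mu + 2\sigma^2)$, so $\mathrm{Var}[\exp(-x)] = \exp(-2\mu+\sigma^2)(\exp(\sigma^2)-1)$. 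Write $a = \exp(-\mu+\tfrac{\sigma^2}{2})$ and $b^2 = \mathrm{Var}[\exp(-x)]$ for brevity.

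First I would handle the mean via the tower property: $\mathbb{E}[S_x] = \mathbb{E}\big[\mathbb{E}[S_x \mid N]\big] = \mathbb{E}[N a] = a\,\mathbb{E}[N] = a \cdot \tfrac{kn}{n+m}$, which is exactly the claimed expression. For the variance I would use the law of total variance,
\begin{equation*}
    \mathrm{Var}[S_x] = \mathbb{E}\big[\mathrm{Var}[S_x \mid N]\big] + \mathrm{Var}\big[\mathbb{E}[S_x \mid N]\big].
\end{equation*}
Conditioned on $N$, $\mathrm{Var}[S_x \mid N] = N b^2$ and $\mathbb{E}[S_x \mid N] = N a$, so the right-hand side becomes $b^2\,\mathbb{E}[N] + a^2\,\mathrm{Var}[N]$. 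Substituting $\mathbb{E}[N] = \tfrac{kn}{n+m}$ and $\mathrm{Var}[N] = k \cdot \tfrac{n}{n+m}\cdot\tfrac{m}{n+m}$ for the binomial, this is
\begin{equation*}
    \frac{kn}{n+m}\Big[\, b^2 + a^2 \cdot \frac{m}{n+m} \,\Big].
\end{equation*}
Then I would plug in $b^2 = \exp(-2\mu+\sigma^2)(\exp(\sigma^2)-1)$ and $a^2 = \exp(-2\mu+\sigma^2)$, factor out $\exp(-2\mu+\sigma^2)$, and simplify the bracket to $\exp(\sigma^2) - 1 + \tfrac{m}{n+m} = \exp(\sigma^2) - \tfrac{n}{n+m}$, which yields precisely $\mathrm{Var}[S_x] = \tfrac{kn}{n+m}\exp(-2\mu+\sigma^2)\big(\exp(\sigma^2) - \tfrac{n}{n+m}\big)$.

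There is no serious obstacle here; the argument is a routine conditioning computation. The only point requiring a little care is making sure the summands are genuinely i.i.d.\ conditional on $N$ — i.e.\ that which specific indices are ``normal'' is exchangeable and independent of the loss values themselves — so that $\mathrm{Var}[S_x\mid N] = N\,\mathrm{Var}[\exp(-x)]$ is valid; this follows from the modeling assumption that the pool is formed by uniform sampling and that each normal interaction's loss is drawn independently from $\mathcal{N}(\mu,\sigma^2)$. I would state this exchangeability explicitly before invoking the law of total variance, and the rest is algebra.
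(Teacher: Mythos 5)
Your proposal is correct and follows essentially the same route as the paper's own proof: the tower property for $\mathbb{E}[S_x]$ and the law of total variance with the log-normal conditional moments $\mathbb{E}[S_x\mid N]=Na$ and $\mathrm{Var}[S_x\mid N]=Nb^2$, followed by substitution of the binomial mean and variance. Your added remark about verifying that the summands are exchangeable and i.i.d.\ conditional on $N$ is a small but worthwhile clarification that the paper leaves implicit.
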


\begin{proof}
    To compute $\mathbb{E}[S_x]$, we apply the Double Expectation Theorem~\cite{rice2007mathematical}. First, we condition on $N$:
    \begin{equation*}
        \begin{aligned}
            \mathbb{E}[S_x] &= \mathbb{E}_{N}\left[ \mathbb{E}_{S_x}[S_x \mid N] \right] \\
            &= \mathbb{E}_{N}\left[ N \exp\left(-\mu + \frac{\sigma^2}{2}\right) \right].
        \end{aligned}
    \end{equation*}
    The inner expectation evaluates to $N \exp\left(-\mu + \frac{\sigma^2}{2}\right)$ since $\mathbb{E}[\exp(-x_i)]$ for each $x_i \sim \mathcal{N}(\mu, \sigma^2)$ is known. Taking the expectation over $N \sim \mathrm{Binomial}(k, \frac{n}{n+m})$, we obtain:
    \begin{equation*}
        \mathbb{E}[S_x] = \frac{kn}{m+n} \exp\left(-\mu + \frac{\sigma^2}{2}\right).
    \end{equation*}

    Next, we compute the variance of $S_x$ using the Law of Total Variance~\cite{chung2000course}:
    \begin{equation*}
        \mathrm{Var}[S_x] = \mathbb{E}_{N}\left[ \mathrm{Var}_{S_x}[S_x \mid N] \right] + \mathrm{Var}_{N}\left[ \mathbb{E}_{S_x}[S_x \mid N] \right].
    \end{equation*}
    For the first term, $\mathrm{Var}_{S_x}[S_x \mid N] = N \exp(-2\mu + \sigma^2) \left( \exp(\sigma^2) - 1 \right)$, leading to:
    \begin{equation*}
        \mathbb{E}_{N}\left[ \mathrm{Var}_{S_x}[S_x \mid N] \right] = \mathbb{E}[N] \exp(-2\mu + \sigma^2) \left( \exp(\sigma^2) - 1 \right).
    \end{equation*}
    For the second term, we use the variance of $N$, yielding:
    \begin{equation*}
        \mathrm{Var}_{N}\left[ \mathbb{E}_{S_x}[S_x \mid N] \right] = \mathrm{Var}[N] \exp(-2\mu + \sigma^2).
    \end{equation*}
    Substituting $\mathbb{E}[N] = \frac{kn}{n+m}$ and $\mathrm{Var}[N] = \frac{knm}{(n+m)^2}$, we obtain the final expression:
    \begin{equation*}
        \mathrm{Var}[S_x] = \frac{kn}{n+m} \exp(-2\mu + \sigma^2) \left( \exp(\sigma^2) - \frac{n}{n+m} \right).
    \end{equation*}
\end{proof}

\begin{proposition}
\label{Pro:2}
    Given two independent random variables $X$ and $Y$, we have 
    \begin{equation*}
        \mathbb{E}\left[ \frac{1}{X+Y} \right] \approx \frac{1}{\mathbb{E}[X] + \mathbb{E}[Y]} \left( 1 + \frac{\mathrm{Var}[X] + \mathrm{Var}[Y]}{\left( \mathbb{E}[X] + \mathbb{E}[Y] \right)^2} \right).
    \end{equation*}
\end{proposition}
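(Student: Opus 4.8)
\textbf{Proof proposal for Proposition~\ref{Pro:2}.}
The plan is to apply a second-order Taylor expansion of the function $f(Z) = 1/Z$ around the point $Z_0 = \mathbb{E}[Z]$, where $Z = X + Y$. First I would set $Z = X+Y$ and record that, by linearity of expectation and independence of $X$ and $Y$, $\mathbb{E}[Z] = \mathbb{E}[X] + \mathbb{E}[Y]$ and $\mathrm{Var}[Z] = \mathrm{Var}[X] + \mathrm{Var}[Y]$. This reduces the two-variable claim to the standard one-variable delta-method approximation $\mathbb{E}[1/Z] \approx 1/\mathbb{E}[Z] + \mathrm{Var}[Z]/\mathbb{E}[Z]^3$.

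Next I would carry out the Taylor expansion explicitly. Writing $Z = \mathbb{E}[Z] + (Z - \mathbb{E}[Z])$ and expanding $f(Z) = 1/Z$ to second order about $\mathbb{E}[Z]$ gives
\begin{equation*}
    \frac{1}{Z} \approx \frac{1}{\mathbb{E}[Z]} - \frac{Z - \mathbb{E}[Z]}{\mathbb{E}[Z]^2} + \frac{(Z - \mathbb{E}[Z])^2}{\mathbb{E}[Z]^3}.
\end{equation*}
Taking expectations term by term, the linear term vanishes because $\mathbb{E}[Z - \mathbb{E}[Z]] = 0$, and the quadratic term contributes $\mathrm{Var}[Z]/\mathbb{E}[Z]^3$. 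Substituting back $\mathbb{E}[Z] = \mathbb{E}[X]+\mathbb{E}[Y]$ and $\mathrm{Var}[Z] = \mathrm{Var}[X]+\mathrm{Var}[Y]$ and factoring out $1/(\mathbb{E}[X]+\mathbb{E}[Y])$ yields exactly the claimed formula.

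The main thing to be careful about is that this is an asymptotic/heuristic approximation rather than an exact identity, so the $\approx$ is doing real work: the argument implicitly assumes that $Z$ concentrates around its mean (e.g.\ the coefficient of variation is small) so that the remainder term in Taylor's theorem — involving the third derivative $f'''$ evaluated at an intermediate point and the third central moment of $Z$ — is negligible. I would state this assumption explicitly and note that the expansion is the usual delta-method estimate; if a rigorous error bound were desired one would need tail control on $Z$ (in particular $Z$ bounded away from $0$), but for the purposes of the subsequent analysis the second-order approximation suffices. No genuine obstacle arises beyond making the regularity caveat precise.
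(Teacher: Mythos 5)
Your proposal is correct and follows essentially the same route as the paper's own proof: a second-order Taylor (delta-method) expansion of $1/Z$ about $\mathbb{E}[Z]$ with $Z=X+Y$, the linear term vanishing in expectation and the quadratic term supplying $\mathrm{Var}[Z]/\mathbb{E}[Z]^3$, followed by substituting $\mathbb{E}[Z]=\mathbb{E}[X]+\mathbb{E}[Y]$ and $\mathrm{Var}[Z]=\mathrm{Var}[X]+\mathrm{Var}[Y]$ by independence. Your explicit caveat that the $\approx$ hides a concentration assumption and a remainder term is a welcome addition the paper leaves implicit.
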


\begin{proof}
    Let $Z = X + Y$ and define $g(Z) = \frac{1}{Z}$. Applying the second-order Taylor expansion~\cite{lehmann2006theory} of $g(Z)$ around $\mathbb{E}[Z]$, we obtain:
    \begin{equation*}
        g(Z) \approx g(\mathbb{E}[Z]) + g'(\mathbb{E}[Z]) \left( Z - \mathbb{E}[Z] \right) + \frac{1}{2} g''(\mathbb{E}[Z]) \left( Z - \mathbb{E}[Z] \right)^2.
    \end{equation*}
    Taking the expectation of both sides, the linear term vanishes due to $\mathbb{E}[Z - \mathbb{E}[Z]] = 0$, leaving:
    \begin{equation*}
        \begin{aligned}
            \mathbb{E}[g(Z)] & \approx g(\mathbb{E}[Z]) + \frac{1}{2} g''(\mathbb{E}[Z]) \mathbb{E}\left[ \left( Z - \mathbb{E}[Z] \right)^2 \right] \\
            & = g(\mathbb{E}[Z]) + \frac{1}{2} g''(\mathbb{E}[Z]) \mathrm{Var}[Z].
        \end{aligned}
    \end{equation*}
    Substituting $g(Z) = \frac{1}{Z}$, we have $g'(\mathbb{E}[Z]) = -\frac{1}{\mathbb{E}[Z]^2}$ and $g''(\mathbb{E}[Z]) = \frac{2}{\mathbb{E}[Z]^3}$. Thus, the expression simplifies to:
    \begin{equation*}
        \begin{aligned}
            \mathbb{E}[g(Z)] & \approx \frac{1}{\mathbb{E}[Z]} + \frac{1}{2} \frac{2}{\mathbb{E}[Z]^3} \mathrm{Var}[Z] \\
            & = \frac{1}{\mathbb{E}[Z]} \left( 1 + \frac{\mathrm{Var}[Z]}{\mathbb{E}[Z]^2} \right).
        \end{aligned}
    \end{equation*}
    Since $Z = X + Y$ and $X$ and $Y$ are independent, we use the properties $\mathbb{E}[Z] = \mathbb{E}[X] + \mathbb{E}[Y]$ and $\mathrm{Var}[Z] = \mathrm{Var}[X] + \mathrm{Var}[Y]$. Substituting these into the above expression gives:
    \begin{equation*}
        \mathbb{E}\left[ \frac{1}{X+Y} \right] \approx \frac{1}{\mathbb{E}[X] + \mathbb{E}[Y]} \left( 1 + \frac{\mathrm{Var}[X] + \mathrm{Var}[Y]}{\left( \mathbb{E}[X] + \mathbb{E}[Y] \right)^2} \right),
    \end{equation*}
    which completes the proof.
\end{proof}

\begin{proof}[\textbf{Proof of Theorem~\ref{the:p_i_j}}]
\label{prf:the_p}
    For \( k \) samples, let \( N \) be the number of normal samples and \( M \) be the number of noisy samples. We have:
    \begin{equation*}
        \begin{aligned}
            \mathbb{E}[N] &= k \cdot \frac{n}{n+m}, \quad &\mathrm{Var}[N] = k \cdot \frac{nm}{(n+m)^2}, \\
            \mathbb{E}[M] &= k \cdot \frac{m}{n+m}, \quad &\mathrm{Var}[M] = k \cdot \frac{nm}{(n+m)^2}.
        \end{aligned}
    \end{equation*}

    When \( k = 1 \), the sampling probabilities simplify to:
    \begin{equation*}
        P_i = P_j = \frac{1}{n+m}.
    \end{equation*}
    Therefore, the expected difference in sampling probabilities is:
    \begin{equation*}
        \mathbb{E}[\Lambda_{\text{normal}} - \Lambda_{\text{noise}}] = \frac{n - m}{n + m}.
    \end{equation*}
    
    Now, for \( k > 1 \), let \( x_i \sim \mathcal{N}(\mu_1, \sigma_1^2) \) represent the loss of a normal sample \( i \), and \( y_j \sim \mathcal{N}(\mu_2, \sigma^2) \) represent the loss of a noisy sample \( j \). According to Equation~\ref{eq:p_i}, the probability of selecting sample \( i \) is:
    \begin{equation*}
        P_i = \frac{\exp(-x_i)}{\sum_{i=1}^N \exp(-x_i) + \sum_{j=1}^M \exp(-y_j)}.
    \end{equation*}
    
    Define:
    \begin{equation*}
        S_x = \sum_{i=1}^N \exp(-x_i), \quad S_y = \sum_{j=1}^M \exp(-y_j).
    \end{equation*}
    The sum of the sampling probabilities of normal interactions becomes:
    \begin{equation*}
        \Lambda_{\text{normal}} = \sum_{i=1}^N P_i = \frac{S_x}{S_x + S_y}.
    \end{equation*}

    Then, we have:
    \begin{equation*}
        \mathbb{E}[\Lambda_{\text{normal}}] = \mathbb{E}[S_x] \cdot \mathbb{E}\left[\frac{1}{S_x + S_y}\right] + \mathrm{Cov}\left(S_x, \frac{1}{S_x + S_y}\right).
    \end{equation*}
    Expanding the covariance term:
    \begin{equation}
        \begin{aligned}
            \mathrm{Cov}\left(S_x, \frac{1}{S_x + S_y}\right) &= \mathbb{E}_N \left[\mathrm{Cov}\left(\sum_{i=1}^N \exp(-x_i) \mid N, \frac{1}{S_x + S_y}\right)\right] \\ 
            & \quad + \mathrm{Cov}\left(\mathbb{E}\left[\sum_{i=1}^N \exp(-x_i) \mid N\right], \mathbb{E}\left[\frac{1}{S_x + S_y}\right]\right).
        \end{aligned}
    \end{equation}
    Assuming a linear dependence between \( S_x + S_y \) and \( \exp(-x_i) \), we introduce a constant \( C \in [\beta, \alpha] \) such that:
    \begin{equation*}
        S_x + S_y \approx \exp(-x_i) + (k-1)C.
    \end{equation*}
    This leads to:
    \begin{equation*}
        \mathrm{Cov}\left(\exp(-x_i), \frac{1}{S_x + S_y}\right) \approx \mathrm{Cov}\left(\exp(-x_i), \frac{1}{\exp(-x_i) + (k-1)C}\right).
    \end{equation*}
    Using the approximation:
    \begin{equation*}
        \frac{1}{\exp(-x_i) + (k-1)C} \approx \frac{1}{(k-1)C} - \frac{\exp(-x_i)}{(k-1)^2 C^2},
    \end{equation*}
    we find:
    \begin{equation*}
        \begin{aligned}
            \mathrm{Cov}\left(\exp(-x_i), \frac{1}{S_x + S_y}\right) & \approx \mathrm{Cov}\left(\exp(-x_i), \frac{1}{(k-1)C}\right) \\
            & \quad - \mathrm{Cov}\left(\exp(-x_i), \frac{\exp(-x_i)}{(k-1)^2 C^2}\right) \\
            & = - \frac{\mathrm{Var}\left[\exp(-x_i)\right]}{(k-1)^2 C^2}.
        \end{aligned}
    \end{equation*}

    Consequently:
    \begin{equation*}
        \mathbb{E}[\Lambda_{\text{normal}}] = \mathbb{E}[S_x] \cdot \mathbb{E}\left[\frac{1}{S_x + S_y}\right] - \frac{kn}{n+m} \cdot \frac{\mathrm{Var}\left[\exp(-x_i)\right]}{(k-1)^2 C^2}.
    \end{equation*}
    
    Finally, applying the definitions of \( \alpha, \beta, \gamma, \eta, \Gamma, \) and \( \chi \), we derive the expression for \( \mathbb{E}[\Lambda_{\text{normal}} - \Lambda_{\text{noise}}] \) as:
    \begin{equation*}
        \begin{aligned}
            \mathbb{E}[\Lambda_{\text{normal}} - \Lambda_{\text{noise}}] = \frac{n\alpha - m\beta}{(m+n)\eta} + \frac{\Gamma}{k} - \frac{\chi}{C^2} \frac{k}{(k-1)^2},
        \end{aligned}
    \end{equation*}
    where the term \( \frac{\chi}{C^2} \frac{k}{(k-1)^2} \) arises from the covariance component of the variance term.
\end{proof}

\subsection{Methods}
\label{sec:app_methods}

The algorithmic flow of PLD is outlined in Algorithm~\ref{al:dtr}.

\renewcommand{\algorithmicrequire}{ \textbf{Input:}}     
\renewcommand{\algorithmicensure}{ \textbf{Output:}}    
\algnewcommand{\LineComment}[1]{\Statex \(\triangleright\) #1}
\begin{algorithm}[t]
    \caption{Training Procedure with PLD} 
    \label{al:dtr}
    \begin{algorithmic}[1]
    \renewcommand{\baselinestretch}{1.5}
        \Require{Training set $\mathcal{D}$, pool size $k$, temperature coefficient $\tau$, batch size $\mathbb{B}$, loss function $\mathcal{L}(u, i, j)$}
        \Ensure{Model parameters $\Theta$.}
        \While{stopping criteria not met}
            \LineComment \textit{PLD}
            \State Draw $\mathbb{B}$ triples $(u, \mathcal{C}_{u}^{k}, j)$ from $\mathcal{D}$. 
            \State Initialize the batch set $\mathcal{D}_{\mathbb{B}} = \emptyset$
            \For{each $(u, \mathcal{C}_{u}^{k}, j)$}
                \State Calculate $l_i$ for $i \in \mathcal{C}_{u}^{k}$ using $\mathcal{L}(u, i, j)$.
                \State Resample $i^*$ based on Equation~\ref{eq:p_tau} within $\mathcal{C}_{u}^{k}$.
                \State Add $(u, i^*, j)$ to the batch set $\mathcal{D}_{\mathbb{B}}$.
            \EndFor
            \LineComment \textit{Standard Training}
            \State Update $\Theta$ according to $\mathcal{L}(u, i, j)$ for each $(u, i^*, j)$ in $\mathcal{D}_{\mathbb{B}}$.
        \EndWhile
        \State \Return $\Theta$
    \end{algorithmic}
\end{algorithm}

\subsection{Model Discussion}
\label{sec:dis}
\begin{table}[t]
  \centering
    \caption{Method complexity comparison.}
    \resizebox{0.425\textwidth}{!}{

\begin{tabular}{ccc}
    \toprule
    \textbf{ Methods } & \textbf{Space Complexity} & \textbf{Time Complexity} \\
    \midrule
    Base & $M$ & $\mathcal{O}(N)$ \\
    T-CE & $M$ & $\mathcal{O}(N\log(N))$ \\
    BOD  & $M + d_1 \times d_2$  & $\mathcal{O}((d_1 \times d_2)N)$ \\
    DCF & $M$ & $\mathcal{O}(N\log(N))$ \\
    \textbf{PLD (ours)} & $M$ & $\mathcal{O}(kN)$ \\
    \bottomrule
    \end{tabular}
    }
  \label{tab:complex}%
\end{table}%

This section compares various reweight-based denoising methods, including T-CE~\cite{wang2021denoising}, BOD~\cite{wang2023efficient}, DCF~\cite{he2024double}, and our PLD, focusing on space and time complexities. The comparison is summarized in Table~\ref{tab:complex}.

\textbf{Space Complexity.}
The space complexity of the base model is determined by the number of parameters, denoted as \(M\). T-CE, DCF, and our PLD do not introduce any additional modules, so their space complexity remains unchanged. In contrast, BOD introduces extra components, specifically a generator and decoder (i.e., EN \(\in \mathbb{R}^{d_1 \times d_2}\) and DE \(\in \mathbb{R}^{d_2}\)), which significantly increases its complexity.

\textbf{Time Complexity.}
The time complexity of the base model is determined by the number of interactions, denoted as \(N\), resulting in a complexity of \(\mathcal{O}(N)\). Both T-CE and DCF require sorting the loss values, increasing their complexity to \(\mathcal{O}(N \log N)\). BOD needs to encode and decode the weights of each edge, leading to a time complexity of \(\mathcal{O}((d_1 \times d_2 + d_1)N)\). Our PLD introduces a resampling process, adding an additional \(\mathcal{O}(2kN)\) to the time complexity, where \(k \ll N\).

In summary, our PLD does not significantly increase the space or time complexity of the base model. Compared to other reweight-based denoising methods, our approach demonstrates clear advantages.

\begin{table}[t]
    \centering
    \caption{Recommendation performance with contrastive learning denoise method.}
    \resizebox{0.45\textwidth}{!}{

\begin{tabular}{lcccc}
    \toprule
    \multicolumn{1}{c}{\multirow{2}{*}{\textbf{Model}}} & \multicolumn{2}{c}{\textbf{Recall} } & \multicolumn{2}{c}{\textbf{NDCG} }\\
    \cmidrule(lr){2-3} \cmidrule(lr){4-5}
    & \textbf{@20} & \textbf{@50} & \textbf{@20} & \textbf{@50}\\
    \midrule
    \textbf{DCCF}& 0.1649 & 0.2217 & 0.1118 & 0.1329 \\
    ~+\textbf{PLD (ours)}&\textbf{0.1710**} &\textbf{0.2392**} &\textbf{0.1151**} &\textbf{0.1428**} \\
    \multicolumn{1}{c}{Gain}& +3.70\% $\uparrow$& +7.90\% $\uparrow$& +2.95\% $\uparrow$& +7.45\% $\uparrow$\\
    \bottomrule
\end{tabular}
    }
\label{tab:app_cl}%
\end{table}

\begin{table}[t]
    \centering
    \caption{Recommendation performance with varying noise ratio across different users.}
    \resizebox{0.45\textwidth}{!}{

\begin{tabular}{lcccc}
    \toprule
    \multicolumn{1}{c}{\multirow{2}{*}{\textbf{Model}}} & \multicolumn{2}{c}{\textbf{Recall} } & \multicolumn{2}{c}{\textbf{NDCG} }\\
    \cmidrule(lr){2-3} \cmidrule(lr){4-5}
    & \textbf{@20} & \textbf{@50} & \textbf{@20} & \textbf{@50}\\
    \midrule
    \textbf{MF}& 0.1046 & 0.1737 & 0.0754 & 0.0983 \\
    ~+\textbf{R-CE}& 0.1171 & 0.1946 & 0.0832 & 0.1086 \\
    ~+\textbf{T-CE}& 0.1111 & 0.1891 & 0.0777 & 0.1030 \\
    ~+\textbf{DeCA}& 0.1112 & 0.1810 & 0.0823 & 0.1052 \\
    ~+\textbf{BOD}&\underline{0.1235} &\underline{0.1986} &\underline{0.0903} &\underline{0.1151} \\
    ~+\textbf{DCF}& 0.1106 & 0.1824 & 0.0808 & 0.1044 \\
    \cmidrule(lr){2-5}
    ~+\textbf{PLD (ours)}&\textbf{0.1358**} &\textbf{0.2219**} &\textbf{0.0988**} &\textbf{0.1266**} \\
    \multicolumn{1}{c}{Gain}& +9.98\% $\uparrow$& +11.73\% $\uparrow$& +9.41\% $\uparrow$& +10.01\% $\uparrow$\\
    \bottomrule
\end{tabular}
    }
\label{tab:app_varying}%
\end{table}

\subsection{Experiments}
\label{sec:app_exp}

We further assess the denoising performance of PLD when combined with certain contrastive learning-based denoising methods. Our results show that PLD can substantially improve the recommendation performance of the state-of-the-art contrastive learning-based denoising method, DCCF~\cite{ren2023disentangled}, as demonstrated in Table~\ref{tab:app_cl}.

In addition, we examine a more realistic scenario where the noise ratio varies across users. To ensure a fair evaluation, we fix the addition of 3 noisy interactions per user (10\% of the average interactions in the Gowalla dataset). Under this setting, the recommendation performance is significantly degraded, causing many methods to fail. However, even in this challenging scenario, PLD exhibits strong performance and achieves notable improvements in denoising, as shown in Table~\ref{tab:app_varying}.

\end{document}